\def\thm@space@setup{%
  \thm@preskip=\parskip \thm@postskip=0pt
}
\newcommand{\mat}[1]{\mathbf{#1}}
\newcommand{\tensor}[1]{\bm{\mathscr{#1}}}
\newcommand{\vectorize}{\textnormal{vec}}
\newcommand{\frobenius}{\textnormal{F}}
\newcommand{\var}{\textnormal{Var}}
\newcommand{\ApproxRidgeRegression}{\textsc{ApproximateRidgeRegression}}
\newcommand{\ktimes}{\otimes}
\newcommand{\opt}{\textnormal{opt}}
\newcommand{\rank}{\textnormal{rank}}
\newcommand{\eff}{\textnormal{eff}}
\newcommand{\R}{\mathbb{R}}
\newcommand{\E}{\mathbb{E}}
\DeclareMathOperator*{\argmin}{arg\,min}
\DeclareMathOperator*{\defeq}{\overset{def}{=}}
\newcommand{\declarecolor}[2]{\definecolor{#1}{RGB}{#2}\expandafter\newcommand\csname #1\endcsname[1]{\textcolor{#1}{##1}}}
\DeclarePairedDelimiter{\set}{\{}{\}}
\DeclarePairedDelimiter{\parens}{(}{)}
\DeclarePairedDelimiter{\bracks}{[}{]}
\DeclarePairedDelimiter{\ceil}{\lceil}{\rceil}
\DeclarePairedDelimiter{\norm}{\lVert}{\rVert}
 \newcommand{\cD}{\mathcal{D}}
 \newcommand{\cR}{\mathcal{R}}
\theoremstyle{plain}
\newtheorem{theorem}{Theorem}[section]
\newtheorem{lemma}[theorem]{Lemma}
\theoremstyle{definition}
\newtheorem{definition}[theorem]{Definition}
\title{Fast Low-Rank Tensor Decomposition by Ridge Leverage Score Sampling}
\author{%
  Matthew Fahrbach\\
  Google Research\\
  \texttt{\href{mailto:fahrbach@google.com}{fahrbach@google.com}}
  \and
  Mehrdad Ghadiri\\
  Georgia Tech\\
  \texttt{\href{mailto:ghadiri@gatech.edu}{ghadiri@gatech.edu}}
  \and
  Thomas Fu\\
  Google Research\\
  \texttt{\href{mailto:thomasfu@google.com}{thomasfu@google.com}}
}
\begin{document}

\maketitle

\begin{abstract}
Low-rank tensor decomposition generalizes low-rank matrix approximation
and
is a powerful technique for discovering
low-dimensional structure in high-dimensional data.
In this paper, we study Tucker decompositions
and use tools from randomized numerical linear algebra
called \emph{ridge leverage scores} to accelerate the core tensor update
step in the widely-used alternating least squares (ALS) algorithm.
Updating the core tensor, a severe bottleneck in ALS,
is a highly-structured ridge regression problem
where the design matrix is a Kronecker product of the factor matrices.
We show how to use approximate ridge leverage scores to construct a
sketched instance for any ridge regression problem such that the solution
vector for the sketched problem is a $(1+\varepsilon)$-approximation to the
original instance.
Moreover, we show that classical leverage scores suffice as an approximation,
which then allows us to exploit the Kronecker structure and
update the core tensor in time
that depends predominantly on the rank and the sketching
parameters (i.e., sublinear in the size of the input tensor).
We also give upper bounds for ridge leverage scores as rows are removed
from the design matrix (e.g., if the tensor has missing entries),
and we demonstrate the effectiveness of our approximate ridge regression
algorithm for large, low-rank Tucker decompositions on both synthetic and
real-world data.
\end{abstract}

\newpage

\section{Introduction}
\label{sec:introduction}

Tensor decomposition has a rich multidisciplinary history,
but it has only recently become ubiquitous
due to a surge of applications in data mining, machine learning, and
signal processing~\cite{kolda2009tensor,rabanser2017introduction,sidiropoulos2017tensor}.
The most prevalent tensor decompositions are the CP and Tucker decompositions,
which are often thought of as generalized singular value decompositions.
Consequently, there are natural notions of low-rank tensor decomposition.
Unlike matrix factorization, however, even computing the analogs of rank
for a tensor is NP-hard~\cite{hillar2013most}.
Therefore, most low-rank tensor decomposition algorithms fix the rank
structure in advance and then optimize the variables of the decomposition
to fit the data.
While conceptually simple, this
technique can be extremely effective for many real-world applications
since high-dimensional data is often inherently low-dimensional.

One of the cornerstones of low-rank tensor decomposition is
the alternating least squares (ALS) algorithm.
For CP and Tucker decompositions, ALS cyclically
optimizes disjoint blocks of variables
while keeping all others fixed.
If no additional constraints or regularization are imposed,
then each step of ALS is an ordinary least squares problem.
For Tucker decompositions, 
the higher order singular value decomposition
(HOSVD) and higher order orthogonal iteration (HOOI) algorithms~\cite{kolda2009tensor}
are popular alternatives,
but (1) do not scale as easily since they
compute SVDs of matricizations of the data tensor in each step,
and (2) do work well for data with missing entries.
Therefore, we focus on ALS for Tucker decompositions
and incorporate a new technique from randomized numerical linear algebra 
called \emph{ridge leverage score sampling} to speed up its bottleneck step.
Recently, \citet{cheng2016spals} and \citet{larsen2020practical} showed that
(statistical) leverage score sampling is useful for accelerating ALS
for CP decompositions.
Leverage scores measure the importance of each observation in a least squares
problem,
and were recently generalized by \citet{ahmed2015fast} to account for
Tikhonov regularization, hence the name ridge leverage scores.
The CP decomposition algorithms in \cite{cheng2016spals,larsen2020practical},
as well as the work of \citet{huaian2019optimal} on
Kronecker product regression, exploit
the fact that the necessary leverage score distribution can be
closely approximated by a related product distribution.
This ultimately leads to efficient sampling subroutines.
We follow a similar approach to give a fast sampling-based ALS algorithm for
Tucker decompositions, and we show how to seamlessly extend
leverage score sampling methods
to account for L2 regularization (i.e., approximate ridge regression).

\subsection{Our Contributions}

This work gives several new results for ridge leverage score sampling
and its applications in low-rank tensor decomposition.
Below is a summary of our contributions:

\begin{enumerate}
    \item Our first result is a method for augmenting
      approximate ridge leverage score distributions such that
      (1) we can sample from the augmented distribution in the
      same amount of time as the original distribution, and
      (2) if we sample rows according to this augmented distribution, then we
      can construct a sketched version for any ridge regression problem
      such that the solution vector for the sketch
      gives a $(1+\varepsilon)$-approximation to
      the original instance.
      We then show that the statistical leverage scores of the design matrix
      for any ridge regression problem are a useful overestimate of the
      $\lambda$-ridge leverage scores when augmented.
      Moreover, we quantify how the sample complexity of sketching
      algorithms decreases as the value of $\lambda$ increases
      (i.e., as the \emph{effective dimensionality} of the ridge regression
      problem shrinks).

    \item Our second key result explores how
      ridge leverage score sampling can be used to compute low-rank
      Tucker decompositions of tensors.
      We consider the ubiquitous alternating least squares algorithm
      and speed up its core tensor update step---a notorious
      bottleneck for Tucker decomposition algorithms.
      We use our approximate ridge regression subroutine and exploit the
      fact that the design matrix in every core tensor update is a Kronecker
      product of the factor matrices.
      In particular, this means that the leverage score distribution of the
      design matrix is a product distribution of the leverage score
      distributions for the factor matrices, hence we can sample rows
      from the augmented distribution
      in time sublinear in the number of its rows.
      Our core tensor update is designed to be fast both in theory
      and in practice, since its time complexity is predominantly a function of
      the rank and sketching parameters.

    \item Next, as a step towards better understanding alternating least
      squares for tensor completion, we derive upper bounds
      for the $\lambda$-ridge leverage scores when rows are removed from the
      design matrix (i.e., if the tensor has missing entries).
      While these bounds hold for general matrices and can be pessimistic
      if a large fraction of rows are removed,
      the proofs provide useful insight into exactly how $\lambda$-ridge
      leverage scores generalize classical leverage scores.
      
    \item Lastly, we demonstrate how our approximate core tensor update
      based on fast (factored) leverage score sampling leads
      to massive improvements in the
      running time for low-rank Tucker decompositions while preserving the
      original solution quality of ALS.
      Specifically,
      we profile this algorithm using large, dense synthetic tensors
      and the movie data of~\citet{malik2018low},
      which explores sketching Tucker decompositions in a data stream model.
\end{enumerate}

\subsection{Related Works}

\paragraph{Tensor Decomposition.}
The algorithms of~\citet{cheng2016spals} and \citet{larsen2020practical}
that use leverage score sampling with ALS to compute (unregularized) CP
decompositions are most directly related.
Avoiding degeneracies in a CP decomposition using ALS
has carefully been studied in~\cite{comon2009tensor}.
For first order methods, a step of gradient descent typically takes as
long as an iteration of ALS since both involve computing the same
quantities~\cite[Remark 3]{sidiropoulos2017tensor}.
SGD-based methods, however, are known to be efficient for certain structured CP
decompositions~\cite{ge2015escaping}.
For Tucker decompositions, \citet{frandsen2020optimization} recently showed
that if the tensor being learned has an exact Tucker decomposition, then all
local minima are globally optimal.
Streaming algorithms for Tucker decompositions based on sketching and
SGD have also recently been explored
by~\citet{malik2018low} and \citet{traore2019singleshot}, respectively.
The more general problem of low-rank tensor completion
is a fundamental approach for estimating the values of
missing data~\cite{acar2011scalable,jain2013low,jain2014provable,filipovic2015tucker},
and has led to notable breakthroughs in computer vision~\cite{liu2012tensor}.
Popular approaches for
tensor completion are based Riemannian optimization~\cite{kressner2014low,kasai2016low,madhav2018dual}
alternating least squares~\cite{zhou2013tensor,grasedyck2015variants,liu2020tensor},
and projected gradient methods~\cite{yu2016learning}.

\paragraph{Ridge Leverage Scores.}
\citet{ahmed2015fast} recently extended the idea of statistical leverage
scores to the setting of ridge regression,
and used these scores to derive a sampling distribution that reduces the sketch size
(i.e., the number of sampled rows) to the effective dimension of the problem.
Since then, sampling from approximate ridge leverage score distributions
has played a critical role in fundamental works for
sparse low-rank matrix approximation~\cite{cohen2017input},
an improved Nystr\"om method via recursive sampling~\cite{musco2017recursive},
bounding the statistical risk of ridge regression~\cite{mccurdy2018ridge},
a new sketching-based iterative method for ridge regression~\cite{chowdhury2018iterative},
and improved bounds for the number of random Fourier features needed for
ridge regression as a function of the effective dimension~\cite{li2019towards}.
Closely related are fast recursive algorithms for
computing approximate leverage scores~\cite{cohen2015uniform} and
for solving overconstrained least squares~\cite{li2013iterative}.
Recent works have also explored sketching for sparse Kronecker product
regression, which exploit a similar product distribution property of 
leverage scores~\cite{diao2018sketching,huaian2019optimal}.

\section{Preliminaries}
\label{sec:preliminaries}


\paragraph{Notation and Terminology.}
The \emph{order} of a tensor is the number of its dimensions,
also known as \emph{ways} or \emph{modes}.
Scalars (zeroth-order tensors) are denoted by normal lowercase letters $x \in \R$,
vectors (first-order tensors) by boldface lowercase letters $\mat{x} \in \R^{n}$,
and matrices (second-order tensors) by boldface uppercase letters $\mat{X} \in \R^{m \times n}$.
Higher-order tensors are denoted by boldface script letters
$\tensor{X} \in \R^{I_1 \times I_2 \times \cdots \times I_N}$.
For higher-order tensors,
we use normal uppercase letters to denote
the size of an index set (e.g., $[N] = \{1,2,\dots,N\}$).
The $i$-th entry of a vector $\mat{x}$ is denoted by $x_i$,
the $(i,j)$-th entry of a matrix $\mat{X}$ by $x_{ij}$,
and the $(i,j,k)$-th entry of a third-order tensor $\tensor{X}$ by $x_{ijk}$.

\paragraph{Linear Algebra.}
Let $\mat{I}_{n}$ denote the $n \times n$ identity matrix and
$\mat{0}_{m \times n}$ denote the $m \times n$ zero matrix.
Denote the transpose of a matrix $\mat{A} \in \R^{m \times n}$
by $\mat{A}^\intercal$ and the Moore--Penrose inverse by $\mat{A}^+$.
The singular value decomposition (SVD) of $\mat{A}$ is a factorization
of the form $\mat{U} \mat{\Sigma} \mat{V}^\intercal$,
where $\mat{U} \in \R^{m \times m}$ and $\mat{V} \in \R^{n \times n}$
are orthogonal matrices,
and $\mat{\Sigma} \in \R^{m \times n}$ is a diagonal matrix with
non-negative real numbers on its diagonal.
The entries $\sigma_{i}(\mat{A})$ of $\mat{\Sigma}$
are the singular values of $\mat{A}$,
and the number of non-zero singular values is equal to $r = \rank(\mat{A})$.
The \emph{compact SVD} is a similar decomposition 
where $\mat{\Sigma} \in \R^{r \times r}$ is a
diagonal matrix containing only the non-zero singular values.
Lastly, we denote the Kronecker product of two matrices $\mat{A} \in \R^{m \times n}$
and $\mat{B} \in \R^{p \times q}$ by $\mat{A} \ktimes \mat{B} \in \R^{(mp)\times(nq)}$.

\paragraph{Tensor Products.}
The \emph{fibers} of a tensor are vectors created by fixing all but one
index
(e.g., for a third-order tensor $\tensor{X}$, the column, row, and tube fibers are
denoted by $\mat{x}_{:jk}$, $\mat{x}_{i:k}$, and $\mat{x}_{ij:}$,
respectively).
The \emph{mode-$n$ unfolding} of a tensor
$\tensor{X} \in \R^{I_1 \times I_2 \times \cdots \times I_N}$
is the matrix $\mat{X}_{(n)} \in \R^{I_n \times (I_1 \cdots I_{n-1}I_{n+1}\cdots I_N)}$
that arranges the mode-$n$ fibers of $\tensor{X}$ as the columns of
$\mat{X}_{(n)}$ ordered lexicographically by index.
Going one step further,
the \emph{vectorization} of $\tensor{X} \in \R^{I_1 \times I_2 \times \cdots \times I_N}$
is the vector $\vectorize(\tensor{X}) \in \R^{I_1 I_2 \cdots I_N}$
formed by vertically stacking the entries of $\tensor{X}$
ordered lexicographically by index
(e.g., this transforms matrix $\mat{X} \in \R^{m \times n}$ into a
tall vector $\vectorize(\mat{X})$ by stacking its columns).

The \emph{$n$-mode product} of a tensor $\tensor{X} \in \R^{I_1\times I_2 \times \cdots \times I_N}$
and matrix $\mat{A} \in \R^{J \times I_n}$ is denoted by
$\tensor{Y} = \tensor{X} \times_{n} \mat{A}$ with
$\tensor{Y} \in \R^{I_1\times \cdots \times I_{n-1} \times J \times I_{n+1} \times \cdots \times I_N}$.
Intuitively,
this operation multiplies each mode-$n$ fiber of
$\tensor{X}$ by the matrix $\mat{A}$.
Elementwise, this operation is expressed as follows:
\[
    \parens*{\tensor{X} \times_{n} \mat{A}}_{i_1\dots i_{n-1} j i_{n+1} \dots i_{N}}
    =
    \sum_{i_n=1}^{I_n} x_{i_1 i_2 \dots i_N} a_{j i_n}.
\]
The Frobenius norm $\norm{\tensor{X}}_{\frobenius}$
of a tensor $\tensor{X}$ is the square root of the sum of
the squares of all its entries.

\paragraph{Tucker Decomposition.}
The \emph{Tucker decomposition} decomposes a tensor
$\tensor{X} \in \R^{I_1 \times I_2 \times \cdots \times I_N}$
into a \emph{core tensor}
$\tensor{G} \in \R^{R_1 \times R_2 \times \cdots \times R_N}$
and multiple \emph{factor matrices}
$\mat{A}^{(n)} \in \R^{I_n \times R_n}$.
We can express the problem of finding a
Tucker decomposition of $\tensor{X}$ as minimizing the loss function
\begin{align*}
  L\parens*{\tensor{G}, \mat{A}^{(1)}, \dots, \mat{A}^{(N)}}
  =
  \norm*{\tensor{X} - \tensor{G} \times_{1}\mat{A}^{(1)}
     \times_{2} \cdots
     \times_{N} \mat{A}^{(N)}
  }_{\frobenius}^2
  + \lambda \parens*{
    \norm*{\tensor{G}}_{\frobenius}^2
    +
    \sum_{n=1}^N
      \norm*{\mat{A}^{(n)}}_{\frobenius}^2
  },
\end{align*}
where $\lambda$ is a regularization parameter.
The elements of 
$\widehat{\tensor{X}} \defeq \tensor{G} \times_{1}\mat{A}^{(1)}
     \times_{2} \mat{A}^{(2)} \times_{3} \cdots
     \times_{N} \mat{A}^{(N)}$
are
\begin{align}
\label{eqn:tucker_decomposition_elementwise}
  \widehat{x}_{i_1 i_2 \dots i_N}
  =
  \sum_{r_1=1}^{R_1} \sum_{r_2 = 1}^{R_2} \cdots \sum_{r_N=1}^{R_N}
  g_{r_1 r_2 \dots r_N}
  a_{i_1 r_1}^{(1)} 
  a_{i_2 r_2}^{(2)} 
  \cdots
  a_{i_N r_N}^{(N)}.
\end{align}
Equation~\Cref{eqn:tucker_decomposition_elementwise}
shows that $\widehat{\tensor{X}}$ is the sum of $R_1 R_2 \cdots R_N$ rank-1 tensors.
The tuple $(R_1, R_2, \dots, R_N)$ is the \emph{multilinear rank} of the
decomposition and is chosen to be much smaller than the dimensions of~$\tensor{X}$.
Sometimes columnwise orthogonality constraints are enforced on the factor
matrices, and hence the Tucker decomposition can be thought of as a
higher-order SVD, but such constraints are not required.

\paragraph{Ridge Leverage Scores.}
The \emph{$\lambda$-ridge leverage score} of the $i$-th row of a matrix
$\mat{A} \in \R^{n \times d}$ is
\begin{equation}
\label{eqn:ridge_leverage_score_def}
    \ell_{i}^{\lambda}\parens*{\mat{A}}
    \defeq
    \mat{a}_{i:}\parens*{\mat{A}^\intercal \mat{A} + \lambda\mat{I}}^+ \mat{a}_{i:}^\intercal.
\end{equation}
We also define the related \emph{cross $\lambda$-ridge leverage score} as
$
\ell_{ij}^{\lambda}\parens*{\mat{A}}
\defeq
\mat{a}_{i:}\parens*{\mat{A}^\intercal \mat{A} + \lambda\mat{I}}^+ \mat{a}_{j:}^\intercal$.
The matrix of cross $\lambda$-ridge leverage scores is
$\mat{A}(\mat{A}^\intercal \mat{A} + \lambda\mat{I})^+\mat{A}^\intercal$,
and we denote its diagonal by $\bm{\mat{\ell}}^\lambda(\mat{A})$
since this vector contains the $\lambda$-ridge leverage scores of $\mat{A}$.
Ridge leverage scores generalize the \emph{statistical leverage scores}
of a matrix, in that setting $\lambda = 0$ recovers
the leverage scores of $\mat{A}$, which we denote by
the vector $\bm{\mat{\ell}}(\mat{A})$.
If we let $\mat{A} = \mat{U} \mat{\Sigma}\mat{V}^\intercal$
be the compact SVD of $\mat{A}$, it can be shown that
\begin{align}
\label{eqn:ridge_leverage_score_svd_def}
  \ell_{ij}^\lambda\parens*{\mat{A}}
  =
  \sum_{k=1}^{r}
    \frac{\sigma_{k}^2\parens*{\mat{A}}}{ \sigma_{k}^2\parens*{\mat{A}} + \lambda }
    u_{ik} u_{jk},
\end{align}
where $r = \rank(\mat{A})$.
Therefore, it follows that each $\ell_{i}^\lambda(\mat{A}) \le 1$ since
$\mat{U}$ is an orthogonal matrix.
The \emph{effective dimension} $d_\eff$ of the ridge regression problem
is the sum of $\lambda$-ridge leverage scores:
\begin{align}
\label{eqn:d_eff_def}
  d_\eff
  =
  \sum_{i=1}^n \ell_{i}^\lambda\parens*{\mat{A}}
  =
  \sum_{i=1}^n \sum_{k=1}^r \frac{\sigma_{k}^2\parens*{\mat{A}}}{ \sigma_{k}^2\parens*{\mat{A}} + \lambda} u_{ik}^2
  =
  \sum_{k=1}^r \frac{\sigma_{k}^2\parens*{\mat{A}}}{ \sigma_{k}^2\parens*{\mat{A}} + \lambda}
  \le r.
\end{align}
The regularization parameter $\lambda$ shrinks the dimensionality of the
problem, so $d_\eff = r$ iff $\lambda = 0$.
The $\lambda$-ridge leverage score of a row measures its importance
when constructing the row space of $\mat{A}$ in the context of ridge regression.
See~\cite{ahmed2015fast,cohen2015uniform} for further details and intuition
about $\lambda$-ridge leverage scores.

\section{Approximate Ridge Regression by Leverage Score Sampling}
\label{sec:approximate_ridge_regression}

We start by introducing a row sampling-based approach for approximately solving
any ridge regression problem via $\lambda$-ridge leverage scores.
Computing ridge leverage scores is often as expensive as solving the
ridge regression problem itself, and thus is not immediately useful for
constructing a smaller, sketched instance to use as a proxy.
However, in this section, we show how to use $\lambda$-ridge
leverage scores overestimates to construct a feasible sampling distribution over the
rows of an augmented design matrix,
and then we use this new distribution to efficiently
sketch an ordinary least squares problem whose solution vector is a 
$(1+\varepsilon)$-approximation to the input ridge regression problem.
Furthermore, we show that classical leverage scores are a sufficient
overestimate of $\lambda$-ridge leverage scores,
for any regularization strength $\lambda \ge 0$,
and always result in efficient sketching subroutines.

It will be useful to first give some context into the derivation of
$\lambda$-ridge leverage scores introduced by~\citet{ahmed2015fast}.
Let us rewrite the ridge regression objective as an ordinary least
squares problem in terms of an augmented design matrix and response vector:
\begin{equation}
\label{eqn:problem_def}
  \mat{x}_\opt
  =
  \argmin_{\mat{x} \in \R^{d}}
  \norm*{\mat{A}\mat{x} - \mat{b}}_{2}^2 + \lambda \norm*{\mat{x}}_{2}^2
  =
  \argmin_{\mat{x} \in \R^{d}}
  \norm*{
  \begin{bmatrix}
    \mat{A} \\
    \sqrt{\lambda}\mat{I}_{d}
  \end{bmatrix}
  \mat{x} -
  \begin{bmatrix}
    \mat{b} \\
    \mat{0}
  \end{bmatrix}
  }_{2}^2
  =
  \argmin_{\mat{x} \in \R^{d}}
  \norm*{\overline{\mat{A}}\mat{x} - \overline{\mat{b}}}_{2}^2.
\end{equation}

\begin{restatable}{lemma}{AugmentedDesignMatrixLemma}
\label{lem:augmented_design_matrix}
Let $\mat{A} \in \R^{n \times d}$ be any matrix
and let $\overline{\mat{A}} \in \R^{(n+d)\times d}$
be defined as in \Cref{eqn:problem_def}.
For each row index $i \in [n]$,
the $\lambda$-ridge leverage scores of $\mat{A}$
are equal to the corresponding leverage scores of $\overline{\mat{A}}$.
Concretely, we have $\ell^\lambda_i(\mat{A}) = \ell_i(\overline{\mat{A}})$.
\end{restatable}

\begin{proof}
The lemma follows from 
$\overline{\mat{A}}^\intercal \overline{\mat{A}} = \mat{A}^\intercal \mat{A} + \lambda \mat{I}$
and the definition of $\lambda$-ridge leverage scores.
\end{proof}

\Cref{lem:augmented_design_matrix} gives an alternate explanation
for why the effective dimension $d_\eff$ shrinks as $\lambda$ increases,
since the gap $d - d_\eff$ is the sum of leverage
scores for the $d$ augmented rows corresponding to the regularization terms.

Next we quantify approximate ridge leverage scores from a
probability distribution point of view.
\begin{definition}
The vector $\hat{\bm{\ell}}^\lambda(\mat{A}) \in \R^n$
is a \emph{$\beta$-overestimate} for the $\lambda$-ridge leverage score
distribution of $\mat{A} \in \R^{n \times d}$ if, for all $i \in [n]$,
it satisfies
\[
  \frac{\hat{\ell}_{i}^\lambda \parens*{\mat{A}}}{
    \norm{\hat{\bm{\ell}}^\lambda \parens*{\mat{A}}}_{1} }
  \ge
  \beta 
  \frac{\ell_{i}^\lambda \parens*{\mat{A}}}{
    \norm{\bm{\ell}^\lambda \parens*{\mat{A}}}_{1} }
  =
  \beta
  \frac{\ell_{i}^\lambda \parens*{\mat{A}}}{d_\eff}.
\]
\end{definition}

Whenever $\beta$-overestimates are used in a leverage score-based sketching method,
the sample complexity of the sketching algorithm increases by a factor
of $O(1/\beta)$. Therefore, we want to construct approximate distributions that
minimize the maximum relative decrease in probability.
For ridge leverage score sampling
(in contrast with classical leverage scores),
it is possible to have $\beta > 1$ due to a decrease in effective dimension
$d_\eff$, which is beneficial since it means fewer samples are needed.

Now we define an augmented probability distribution, which will allow us to
seamlessly apply sketching tools for ordinary least squares to ridge
regression.
The augmented distribution is constructed from a ridge leverage score
$\beta$-overestimate, and one of its key properties is that we can 
sample from it in the same amount of time that we can sample
from the $\beta$-overestimate distribution.

\begin{definition}
Let $\mathcal{D}(\hat{\bm{\ell}}^\lambda(\mat{A}), d_\eff')$
denote the \emph{augmented distribution}
of the $\beta$-overestimate $\hat{\bm{\ell}}^\lambda(\mat{A})$,
where $d_\eff' \ge 0$ is a lower bound for the effective dimension of $\mat{A}$.
The sample space of this distribution is the index set $[n+d]$,
and its probability mass function is defined as
  \begin{align*}
    \Pr(X = i) \propto
    \begin{cases}
      \hat{\ell}^\lambda_{i}(\mat{A}) & \text{if $i \in [n]$}, \\
      \min\{1, d - d_\eff'\} & \text{if $i \in [n+d] \setminus [n]$}.
    \end{cases}
  \end{align*}
\end{definition}

Note that sampling from this augmented distribution does not require any
information about the leverage scores for the $d$ additional rows in the
augmented design matrix in Equation~\Cref{eqn:problem_def}.
To efficiently generate a sample from this distribution, we first
flip a coin to branch on the two subsets of indices
and then we sample from each conditional distribution accordingly.

Next we claim the augmented distribution of a
$\beta$-overestimate for a ridge
leverage score distribution is a $\beta'$-overestimate for the
leverage score distribution of the augmented design matrix
in Equation~\Cref{eqn:problem_def}.
The proofs for all of the remaining results in this section
are deferred to \Cref{app:approximate_ridge_regression}.

\begin{restatable}{lemma}{BetaOverestimateTheorem}
\label{lemma:beta_overestimate}
If $\hat{\bm{\ell}}^\lambda(\mat{A})$ is a $\beta$-overestimate for the
$\lambda$-ridge leverage score distribution of $\mat{A} \in \R^{n \times d}$,
then the probability vector for the distribution
$\mathcal{D}(\hat{\bm{\ell}}^\lambda(\mat{A}), d_\eff')$
is a $\beta'$-overestimate for the
leverage score distribution of $\overline{\mat{A}} \in \R^{(n+d) \times d}$,
where
\begin{equation}
\label{eqn:beta_prime}
  \beta' =
  \min\set*{
    \parens*{1 + \frac{d \min\set{1, d-d_\eff'}}{\norm{\hat{\bm{\ell}}^\lambda(\mat{A})}_{1}} }^{-1}
    \frac{\beta d}{d_\eff}
    ,
  \parens*{\frac{\norm{\hat{\bm{\ell}}^\lambda(\mat{A})}_{1}}{d}
  + \min\set{1, d-d_\eff'}}^{-1}
  }.
\end{equation}
\end{restatable}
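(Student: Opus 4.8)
The plan is to reduce everything to the definition of a $\beta'$-overestimate and to the explicit leverage-score values of $\overline{\mat{A}}$. By \Cref{lem:augmented_design_matrix} we already know that for each original row $i \in [n]$ the leverage score of $\overline{\mat{A}}$ is $\ell_i(\overline{\mat{A}}) = \ell_i^\lambda(\mat{A})$, and since leverage scores sum to the rank, the $d$ augmented rows carry total leverage $d - d_\eff$. The task is to compare the augmented probability mass $\Pr(X=i)$ against the true leverage-score probability $\ell_i(\overline{\mat{A}})/\norm{\bm{\ell}(\overline{\mat{A}})}_1 = \ell_i(\overline{\mat{A}})/d$ and extract the worst-case ratio $\beta'$ over all $i \in [n+d]$.

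First I would write down the normalizing constant of the augmented distribution. The unnormalized masses sum to $Z = \norm{\hat{\bm{\ell}}^\lambda(\mat{A})}_1 + d\min\set{1, d-d_\eff'}$, so $\Pr(X=i) = \hat{\ell}_i^\lambda(\mat{A})/Z$ for $i \in [n]$ and $\Pr(X=i) = \min\set{1,d-d_\eff'}/Z$ for the augmented indices. Next I would split the overestimate check into the two cases appearing in the sample space. For an original index $i \in [n]$, the ratio $\Pr(X=i) \big/ \parens*{\ell_i(\overline{\mat{A}})/d}$ equals $\frac{d}{Z}\cdot\frac{\hat{\ell}_i^\lambda(\mat{A})}{\ell_i^\lambda(\mat{A})}$; using the $\beta$-overestimate hypothesis $\hat{\ell}_i^\lambda(\mat{A})/\ell_i^\lambda(\mat{A}) \ge \beta \norm{\hat{\bm{\ell}}^\lambda(\mat{A})}_1/d_\eff$ and substituting $Z$, this lower bound simplifies to the first term of the minimum in \Cref{eqn:beta_prime}, namely $\parens*{1 + \frac{d\min\set{1,d-d_\eff'}}{\norm{\hat{\bm{\ell}}^\lambda(\mat{A})}_1}}^{-1}\frac{\beta d}{d_\eff}$. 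For an augmented index $i \in [n+d]\setminus[n]$, I would use the true leverage scores of those rows, bound $\ell_i(\overline{\mat{A}}) \le \min\set{1, d-d_\eff'}$ (each score is at most $1$, and their collective budget is $d - d_\eff \le d - d_\eff'$ by the lower-bound assumption on the effective dimension), and compute $\Pr(X=i)\big/\parens*{\ell_i(\overline{\mat{A}})/d} \ge \frac{d\min\set{1,d-d_\eff'}}{Z\,\ell_i(\overline{\mat{A}})} \ge d/Z$, which rearranges to the second term $\parens*{\frac{\norm{\hat{\bm{\ell}}^\lambda(\mat{A})}_1}{d} + \min\set{1,d-d_\eff'}}^{-1}$. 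Taking the minimum over both cases yields exactly $\beta'$.

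The main obstacle I anticipate is the second case: the augmented distribution assigns each regularization row the \emph{uniform} unnormalized mass $\min\set{1,d-d_\eff'}$, whereas the actual leverage scores $\ell_i(\overline{\mat{A}})$ of those rows are neither uniform nor known in closed form. The key inequality making the argument go through is that no individual augmented leverage score exceeds $\min\set{1, d-d_\eff'}$. Establishing this cleanly requires two facts working together: the universal bound $\ell_i(\overline{\mat{A}}) \le 1$ from orthogonality of the left singular vectors, and the total-budget bound $\sum_{i \in [n+d]\setminus[n]} \ell_i(\overline{\mat{A}}) = d - d_\eff \le d - d_\eff'$, which forces each score below $d - d_\eff'$ as well. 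I would make sure the clamping by $\min\set{1,\cdot\,}$ is handled consistently in both the numerator of $\Pr(X=i)$ and the upper bound on $\ell_i(\overline{\mat{A}})$, since that is where an off-by-a-factor error is most likely to creep in. The remaining algebra — substituting $Z$ and simplifying each ratio — is routine.
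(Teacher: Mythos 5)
Your proposal is correct and follows essentially the same route as the paper's proof: normalize the augmented distribution, split into original indices (where \Cref{lem:augmented_design_matrix} and the $\beta$-overestimate hypothesis yield the first term of the minimum) and regularization indices (where the cap $\ell_j(\overline{\mat{A}}) \le \min\{1, d-d_\eff'\}$ yields the second term). Your explicit justification of that cap---combining the universal bound $\ell_j(\overline{\mat{A}}) \le 1$ with the budget $\sum_j \ell_j(\overline{\mat{A}}) = d - d_\eff \le d - d_\eff'$---is exactly the step the paper performs via its upper bound on $\Pr(X=j)$, with the inequality $d_\eff' \le d_\eff$ made slightly more explicit in your version.
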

This lemma is a simple consequence of the definitions of a $\beta$-overestimate
and the effective dimension.
Although Equation~\Cref{eqn:beta_prime} 
may initially seem unwieldy since $d_\eff$ could be difficult to compute, 
we can cleanly lower bound $\beta'$ using 
the fact that $d_\eff \le d$.

The next result follows immediately from the proof of \Cref{lemma:beta_overestimate}.
Observe that there is no explicit dependency on the effective dimension
in this statement, as it gets cancelled out when using classical leverage
scores as the overestimates.

\begin{restatable}{corollary}{BetaOverestimateCorollary}
\label{cor:leverage_score_overestimate}
The leverage scores of $\mat{A} \in \R^{n \times d}$
are a $(d_\eff / \rank(\mat{A}))$-overestimate for the $\lambda$-ridge leverage
scores of $\mat{A}$.
Therefore, the distribution
$\mathcal{D}(\bm{\ell}(\mat{A}), d_\eff')$
is a $(\rank(\mat{A})/d + \min\{1,d - d_\eff'\})^{-1}$-overestimate
for the leverage scores of $\overline{\mat{A}} \in \R^{(n+d) \times d}$.
\end{restatable}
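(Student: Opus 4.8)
The plan is to prove the two sentences of the corollary in turn: first verify directly that the classical leverage scores form a $(d_\eff/\rank(\mat{A}))$-overestimate for the $\lambda$-ridge leverage scores, and then feed this into \Cref{lemma:beta_overestimate} and simplify the resulting expression for $\beta'$.

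For the first claim, I would work from the SVD characterization in \Cref{eqn:ridge_leverage_score_svd_def}. Comparing $\ell_i^\lambda(\mat{A}) = \sum_{k=1}^r \frac{\sigma_k^2}{\sigma_k^2 + \lambda} u_{ik}^2$ against $\ell_i(\mat{A}) = \sum_{k=1}^r u_{ik}^2$ and using $\sigma_k^2/(\sigma_k^2+\lambda) \le 1$ for all $\lambda \ge 0$ gives $\ell_i(\mat{A}) \ge \ell_i^\lambda(\mat{A})$ entrywise. Orthonormality of the columns of $\mat{U}$ yields $\norm{\bm{\ell}(\mat{A})}_1 = \sum_{k=1}^r \sum_{i=1}^n u_{ik}^2 = \rank(\mat{A})$, while $\norm{\bm{\ell}^\lambda(\mat{A})}_1 = d_\eff$ by \Cref{eqn:d_eff_def}. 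Substituting these two sums into the definition of a $\beta$-overestimate with $\beta = d_\eff/\rank(\mat{A})$, the factor $\rank(\mat{A})$ cancels on the left and $d_\eff$ cancels on the right, so the required inequality collapses to $\ell_i(\mat{A}) \ge \ell_i^\lambda(\mat{A})$, which we have just verified.

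For the second claim, I would instantiate \Cref{lemma:beta_overestimate} with $\hat{\bm{\ell}}^\lambda(\mat{A}) = \bm{\ell}(\mat{A})$, so that $\norm{\hat{\bm{\ell}}^\lambda(\mat{A})}_1 = \rank(\mat{A})$ and $\beta = d_\eff/\rank(\mat{A})$, and substitute these into \Cref{eqn:beta_prime}. Writing $c = \min\{1, d - d_\eff'\}$, the first argument of the minimum becomes $(1 + dc/\rank(\mat{A}))^{-1}(d/\rank(\mat{A}))$, where the two $d_\eff$ factors have cancelled, and this equals $d/(\rank(\mat{A}) + dc)$; the second argument is $(\rank(\mat{A})/d + c)^{-1}$, which is also $d/(\rank(\mat{A}) + dc)$. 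Since the two arguments coincide, the minimum equals $(\rank(\mat{A})/d + c)^{-1}$, exactly the claimed factor. The computation is mechanical and there is no genuine obstacle; the one point worth flagging is precisely this cancellation of $d_\eff$, which is why the final overestimate factor depends only on $\rank(\mat{A})$, $d$, and $d_\eff'$ and not on the (possibly hard to compute) effective dimension itself.
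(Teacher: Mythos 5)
Your proposal is correct and follows essentially the same route as the paper: it verifies $\ell_i(\mat{A}) \ge \ell_i^\lambda(\mat{A})$ via the SVD characterization together with $\norm{\bm{\ell}(\mat{A})}_1 = \rank(\mat{A})$ for the first claim, and then instantiates \Cref{lemma:beta_overestimate} with $\hat{\bm{\ell}}^\lambda(\mat{A}) = \bm{\ell}(\mat{A})$ and $\beta = d_\eff/\rank(\mat{A})$ for the second. Your one improvement in rigor is explicitly checking that both arguments of the minimum in \Cref{eqn:beta_prime} coincide at $d/(\rank(\mat{A}) + d\min\{1, d-d_\eff'\})$, a step the paper's proof glosses over by writing down only the second term.
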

We can always guarantee $\beta' \ge 1/2$ by using the statistical leverage
scores as the $\beta$-overestimate for $\lambda$-ridge leverage scores
since $\rank(\mat{A}) \le d$ and by letting $d_\eff' = 0$.
We choose to write the statements above, however, in terms of $d_\eff'$ since
they perfectly recover leverage score overestimates for ordinary least
squares when $\lambda = 0$ and the effective dimension $d_\eff = d$.

We are now ready to
present our approximate ridge regression algorithm and its guarantees.
This algorithm constructs an augmented distribution from the provided 
$\beta$-overestimate
and uses \Cref{lemma:beta_overestimate}
together with several well-known sketching ideas from randomized
numerical linear algebra~\cite{drineas2006fast,drineas2011faster,woodruff2014sketching}
to bound the number of row samples
needed from the augmented design matrix in Equation~\Cref{eqn:problem_def}.
We explain how all of the leverage score sampling building blocks
interact in~\Cref{app:approximate_ridge_regression}.
In particular, we show that the number of samples $s$ defined on
Line~3 of \Cref{alg:approximate_ridge_regression}
is sufficient for the
sketch matrix $\mat{S}$ to be a subspace embedding of the augmented least
squares problem.

\begin{algorithm}[H]
\caption{Approximate ridge regression using
  a $\beta$-overestimate for $\lambda$-ridge leverage scores.}
\label{alg:approximate_ridge_regression}
\begin{algorithmic}[1]
\Function{\ApproxRidgeRegression}{$\mat{A} \in \R^{n \times d}$, $\mat{b} \in \R^{n}$,
    $\beta$-overestimate 
    $\hat{\bm{\ell}}^\lambda\parens{\mat{A}}$
    for the $\lambda$-ridge
    leverage scores of $\mat{A}$,
    lower bound $d_\eff'$ for the effective dimension of $\mat{A}$,
    $\varepsilon$,
    $\delta$}
  \State Normalize $\hat{\bm{\ell}}^\lambda\parens{\mat{A}}$ so that
    $\norm{\hat{\bm{\ell}}^\lambda\parens{\mat{A}}}_{1} = d$
    and set $\beta' \gets \min\{\beta, 1\} / (1 + \min\{1,d-d_\eff'\})$
  \State Set number of samples $s \gets \ceil{ 4 d / \beta'
      \max\{420 \ln(4d / \delta), 1/(\delta \varepsilon) \}}$ 
  \State Set $\overline{\mat{A}} \gets
  \begin{bmatrix}
    \mat{A}~;~\sqrt{\lambda} \mat{I}_{d}
  \end{bmatrix}$
  and $\overline{\mat{b}} \gets
  \begin{bmatrix}
    \mat{b}~;~\mat{0}
  \end{bmatrix}$
  as in Equation~\Cref{eqn:problem_def}
  \State Initialize sketch matrix $\mat{S} \gets \mat{0}_{s \times (n+d)}$
  \For{$i=1$ to $s$}
    \State Sample $j \sim \mathcal{D}(\hat{\bm{\ell}}^\lambda\parens{\mat{A}},d_\eff')$
    from the augmented distribution
    and set $s_{ij} \gets 1 / \sqrt{\Pr(j) s}$
  \EndFor
  \State \textbf{return} $\mat{\tilde x}_{\opt}
    \gets \argmin_{\mat{x} \in \R^{d}}\norm{\mat{S}\overline{\mat{A}}\mat{x} - \mat{S}\overline{\mat{b}}}_{2}^2$ 
\EndFunction
\end{algorithmic}
\end{algorithm}

\begin{restatable}{theorem}{ApproximateRidgeRegression}
\label{thm:approximate_ridge_regression}
\Cref{alg:approximate_ridge_regression}~samples
$s = O(d\max\{1/\beta,1\} \max\{\ln(d/\delta), 1/(\delta\varepsilon)\})$ rows
and returns a vector $\mat{\tilde x}_{\opt} \in \R^{d}$ such that,
with probability at least $1-\delta$, we have
\[
    \norm*{\mat{A}\mat{\tilde x}_{\opt} - \mat{b}}_{2}^2
    + \lambda \norm*{\mat{\tilde x}_{\opt}}_{2}^2
    \le
    \parens*{1 + \varepsilon}
    \parens*{\norm*{\mat{A}\mat{x}_{\opt} - \mat{b}}_{2}^2
    + \lambda \norm*{\mat{x}_{\opt}}_{2}^2}.
\]
Let $t$ denote the time complexity of sampling from
$\mathcal{D}(\hat{\bm{\ell}}^\lambda(\mat{A}),d_\eff')$
and let $T(n',d')$ be the time needed to solve a least squares
problem of size $n' \times d'$.
The running time of 
\Cref{alg:approximate_ridge_regression} is $O(st + T(s,d))$.
\end{restatable}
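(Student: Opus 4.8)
The plan is to reduce the ridge regression problem to an ordinary least squares problem via the augmented system in \Cref{eqn:problem_def}, so that it suffices to show the sketched solution $\mat{\tilde x}_{\opt}$ is a $(1+\varepsilon)$-approximation for the OLS instance $(\overline{\mat{A}}, \overline{\mat{b}})$; because the ridge objective equals $\norm{\overline{\mat{A}}\mat{x} - \overline{\mat{b}}}_2^2$ for every $\mat{x}$, any such guarantee transfers verbatim to the ridge objective. By \Cref{lemma:beta_overestimate} (and the normalization on Line~1), the augmented distribution $\mathcal{D}(\hat{\bm{\ell}}^\lambda(\mat{A}), d_\eff')$ is a $\beta'$-overestimate for the leverage score distribution of $\overline{\mat{A}}$ with $\beta' \ge \min\{\beta,1\}/(1 + \min\{1, d-d_\eff'\})$, so the sampling probabilities on Line~6 satisfy $\Pr(j) \ge \beta' \ell_j(\overline{\mat{A}}) / \rank(\overline{\mat{A}})$. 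This is exactly the hypothesis required to invoke the standard leverage-score sampling analysis for least squares.

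I would then establish the two well-known sufficient conditions under which a sampling-and-rescaling sketch yields a $(1+\varepsilon)$-approximate least squares solution. Let $\mat{U}$ be an orthonormal basis for the column space of $\overline{\mat{A}}$ and let $\mat{b}^\perp \defeq \overline{\mat{b}} - \overline{\mat{A}}\mat{x}_{\opt}$ be the optimal residual. The first condition is that $\mat{S}$ is a $\tfrac{1}{2}$-subspace embedding, i.e.\ $\norm{\mat{S}\mat{U}\mat{y}}_2^2 = (1 \pm \tfrac{1}{2})\norm{\mat{y}}_2^2$ for all $\mat{y}$; the second is the approximate-matrix-multiplication bound $\norm{(\mat{S}\mat{U})^\intercal \mat{S}\mat{b}^\perp}_2^2 \le \tfrac{\varepsilon}{2}\norm{\mat{b}^\perp}_2^2$. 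Given both, a short deterministic argument (optimality of $\mat{x}_{\opt}$ makes $\mat{U}^\intercal \mat{b}^\perp = \mat{0}$, and the embedding controls the inverted normal equations) yields $\norm{\overline{\mat{A}}\mat{\tilde x}_{\opt} - \overline{\mat{b}}}_2^2 \le (1+\varepsilon)\norm{\overline{\mat{A}}\mat{x}_{\opt} - \overline{\mat{b}}}_2^2$.

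It then remains to certify each condition with the claimed sample size. Because the sketch samples $j \sim \mathcal{D}$ with replacement and rescales each selected row by $1/\sqrt{\Pr(j)\,s}$, the matrix $(\mat{S}\mat{U})^\intercal \mat{S}\mat{U}$ is an average of $s$ i.i.d.\ rank-one matrices whose expectation is $\mat{I}$, each of operator norm at most $\ell_j(\overline{\mat{A}})/\Pr(j) \le \rank(\overline{\mat{A}})/\beta' \le d/\beta'$; a matrix Chernoff bound then gives the $\tfrac{1}{2}$-subspace embedding with probability $1 - \delta/2$ once $s = \Omega((d/\beta')\ln(d/\delta))$, which accounts for the $\ln(d/\delta)$ term and the constant $420$ on Line~2. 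For the matrix-multiplication bound I would use a second-moment computation: since $\mat{U}^\intercal \mat{b}^\perp = \mat{0}$, the estimator $(\mat{S}\mat{U})^\intercal \mat{S}\mat{b}^\perp$ is zero-mean with $\E\bracks{\norm{(\mat{S}\mat{U})^\intercal \mat{S}\mat{b}^\perp}_2^2} \le \tfrac{1}{s}\sum_j \tfrac{\ell_j(\overline{\mat{A}})\, (b^\perp_j)^2}{\Pr(j)} \le \tfrac{d}{s\beta'}\norm{\mat{b}^\perp}_2^2$, and Markov's inequality delivers the second condition with probability $1-\delta/2$ once $s = \Omega(d/(\beta'\varepsilon\delta))$, which accounts for the $1/(\delta\varepsilon)$ term. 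A union bound, together with $1/\beta' = \Theta(\max\{1/\beta,1\})$ from Line~1, collapses the two requirements into the stated $s = O(d\max\{1/\beta,1\}\max\{\ln(d/\delta), 1/(\delta\varepsilon)\})$. The running time is immediate: drawing $s$ samples costs $O(st)$, forming the $s \times d$ sketched system is negligible, and solving it costs $T(s,d)$.

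The main obstacle I anticipate is the second-moment/Markov step: one must be careful that using Markov (rather than a Bernstein-type concentration) is precisely what produces the $1/\delta$---as opposed to $\ln(1/\delta)$---dependence in the $1/(\delta\varepsilon)$ term, and that the variance bound correctly exploits $\Pr(j) \ge \beta' \ell_j(\overline{\mat{A}})/\rank(\overline{\mat{A}})$ rather than the exact leverage scores. Tracking the $\beta'$-overestimate factor cleanly through both the Chernoff and the second-moment arguments, and ensuring the reweighting $1/\sqrt{\Pr(j)\,s}$ makes the estimators unbiased with the right variance, is where the bookkeeping is most delicate.
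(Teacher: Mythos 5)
Your proposal is correct and follows essentially the same route as the paper: reduce to the augmented OLS instance, invoke \Cref{lemma:beta_overestimate} to get a $\beta'$-overestimate for $\overline{\mat{A}}$'s leverage scores, verify the two Drineas-style structural conditions (subspace embedding via concentration for the first, a second-moment bound plus Markov for the second, which is exactly where the $1/(\delta\varepsilon)$ term arises), and union bound. The only cosmetic difference is that you prove the embedding via matrix Chernoff directly where the paper cites the leverage-score sampling theorem of \citet{woodruff2014sketching}, which is the same argument.
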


One of the main takeaways from this result is that if we use leverage scores
as the input $\beta$-overestimate, \Cref{cor:leverage_score_overestimate}
ensures that only $O(d \max\{\log(d/\delta), 1/(\delta\varepsilon)\})$ row samples
are needed since $\beta' \ge 1/2$.
This holds for any regularization strength $\lambda \ge 0$ since 
$\lambda = 0$ corresponds to full effective dimension and is in some sense
the hardest type of ridge regression problem.
Finally, we note that the running times in \Cref{thm:approximate_ridge_regression}
are templated so that the result can easily be combined with fast sampling
routines and modern least squares algorithms.

\section{Fast Low-Rank Tucker Decomposition}
\label{sec:algorithm}

Now we use the \ApproxRidgeRegression~algorithm to accelerate the
core tensor update steps in the
alternating least squares (ALS) algorithm for Tucker decompositions.
To achieve this, we exploit the Kronecker product structure of the design matrix in
the core tensor subproblem and
use the leverage scores of the design matrix
as an overestimate for the true $\lambda$-ridge leverage scores.
The leverage scores of a Kronecker product matrix factor cleanly
into the product of leverage scores of its factor matrices,
which allows us to sample from the augmented leverage score
distribution in time that is sublinear in the number of rows of
this design matrix.
A similar technique was recently used in the context of ALS for tensor CP decompositions,
where the leverage scores of a Kronecker product matrix were used as an
overestimate for the leverage scores of a Khatri--Rao product
design matrix~\cite{cheng2016spals,larsen2020practical}.

We start by presenting the ALS algorithm for Tucker decompositions
and briefly analyze the amount of work in each step.
To update factor matrix $\mat{A}^{(n)}$, it solves $I_n$
ridge regression problems, all of which share the same design matrix
of size $(I_1 \cdots I_{n-1} I_{n+1} \cdots I_{N}) \times R_{n}$.
For each core tensor update, we solve a ridge regression problem
whose design matrix has dimensions
$(I_1 I_2 \cdots I_{N}) \times (R_1 R_2 \cdots R_N)$.
Updating the core tensor is by far the most expensive step in an
iteration of ALS, hence our motivation for making it faster
by an approximate ridge regression subroutine.

\begin{algorithm}[H]
\caption{Alternating least squares (ALS) algorithm for regularized Tucker decomposition.}
\label{alg:alternating_least_squares}
\begin{algorithmic}[1]
\Function{ALS}{tensor $\tensor{X} \in \R^{I_1\times I_2\times \cdots \times I_N}$,
                    multilinear rank $(R_1,R_2,\dots,R_N)$,
                    regularization $\lambda$}
\State Initialize random core tensor $\tensor{G} \in \R^{R_1 \times R_2 \times \dots \times R_n}$
\State Initialize random factor matrix
$\mat{A}^{(n)} \in \R^{I_n \times R_n}$ for $n=1$ to $N$
    \Repeat
        \For{$n=1$ to $N$}
            \State Set $\mat{K} \gets \mat{G}_{(n)} \parens{\mat{A}^{(1)} \otimes \dots \otimes \mat{A}^{(n-1)} \otimes \mat{A}^{(n+1)} \otimes \dots \otimes \mat{A}^{(N)}}^\intercal$
            and $\mat{B} \gets \mat{X}_{(n)}$
            \For{$i=1$ to $I_{n}$}
              \State Update factor matrix row $\mat{a}^{(n)}_{i:} \gets
                \argmin_{\mat{y} \in \R^{1 \times R_n}}
                    \norm{\mat{y} \mat{K} - \mat{b}_{i:}}_{2}^2
                + \lambda \norm{\mat{y}}_{2}^2$
            \EndFor
        \EndFor
        \State Update $\tensor{G} \gets \argmin_{\tensor{G}'}
             \norm{\parens{\mat{A}^{(1)} \ktimes \mat{A}^{(2)} \ktimes \cdots \ktimes \mat{A}^{(N)}} \vectorize(\tensor{G}') - \vectorize(\tensor{X})}_{2}^2
             + \lambda \norm{\vectorize{\parens{\tensor{G}'}}}_{2}^2$

    \Until{convergence}
    \State \textbf{return}
      $\tensor{G}, \mat{A}^{(1)}, \mat{A}^{(2)}, \dots, \mat{A}^{(N)}$ 
\EndFunction
\end{algorithmic}
\end{algorithm}

\subsection{Approximate Core Tensor Update}

Next we explore the structure of $\lambda$-ridge leverage scores for Kronecker
product matrices and describe how to efficiently sample from the augmented
leverage score distribution.
The following result shows how $\lambda$-ridge leverage scores of a Kronecker
product matrix decompose according to the SVDs of its factor matrices.
In the special case of leverage scores (i.e., $\lambda = 0$),
the expression completely factors,
which we can exploit since it induces a product distribution.
The proof of this result repeatedly uses the mixed-product property
for Kronecker products with the pseudoinverse-based definition
of $\lambda$-ridge leverage scores in Equation~\Cref{eqn:ridge_leverage_score_def}.
We defer the proofs of all results in this section to \Cref{app:algorithm}.

\begin{restatable}{lemma}{KroneckerCrossLeverageScores}
\label{lemma:kronecker_cross_leverage_scores}
Suppose 
$\mat{K} = \mat{A}^{(1)} \otimes \mat{A}^{(2)} \otimes \dots \otimes \mat{A}^{(N)}$,
where each factor matrix $\mat{A}^{(n)} \in \R^{I_n \times R_n}$,
and let $(i_1,i_2,\dots,i_N)$ denote the canonical row indexing of $\mat{K}$
according to its factors.
If the SVD of
  $\mat{A}^{(n)}
  = \mat{U}^{(n)} \mat{\Sigma}^{(n)} {\mat{V}^{(n)}}^\intercal $,
then for $\lambda > 0$, the cross $\lambda$-ridge leverage scores of $\mat{K}$ are
\begin{align*}
    \ell_{(i_1,\dots,i_N),(j_1,\dots,j_N)}^\lambda \parens*{\mat{K}}
    =
    \sum_{\mat{t} \in T}
    \frac{\prod_{n=1}^N \sigma_{t_n}^2(\mat{A}^{(n)})  }{\prod_{n=1}^N \sigma_{t_n}^2(\mat{A}^{(n)}) + \lambda}
    \parens*{
      \prod_{n=1}^N
      u_{i_n t_n}^{(n)}
    }\parens*{
      \prod_{n=1}^N
      u_{j_n t_n}^{(n)}
    },
\end{align*}
where the sum is over the row index set $T = [I_1] \times [I_2] \times \dots \times [I_N]$.
Therefore, given the SVDs of the factor matrices, we can compute each
cross $\lambda$-ridge leverage score of $\mat{K}$ in
$O(R_1 R_2 \cdots R_N \cdot N)$ time.
Furthermore, for (statistical) cross leverage scores, we have
\begin{align*}
  \ell_{(i_1,\dots,i_N),(j_1,\dots,j_N)}\parens*{\mat{K}}
  =
  \prod_{n=1}^N
  \ell_{i_n j_n} \parens*{\mat{A}^{(n)}}.
\end{align*}
\end{restatable}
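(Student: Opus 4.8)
The plan is to work directly from the pseudoinverse definition in \Cref{eqn:ridge_leverage_score_def}, forming the full cross $\lambda$-ridge leverage score matrix $\mat{K}(\mat{K}^\intercal \mat{K} + \lambda \mat{I})^+ \mat{K}^\intercal$ and reading off its $((i_1,\dots,i_N),(j_1,\dots,j_N))$ entry. The engine throughout is the mixed-product property $(\mat{A}\ktimes\mat{B})(\mat{C}\ktimes\mat{D}) = (\mat{A}\mat{C})\ktimes(\mat{B}\mat{D})$, together with the fact that a Kronecker product of orthogonal matrices is orthogonal.

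First I would diagonalize the Gram matrix. Writing each factor in full SVD form $\mat{A}^{(n)} = \mat{U}^{(n)}\mat{\Sigma}^{(n)}{\mat{V}^{(n)}}^\intercal$ and applying the mixed-product property repeatedly, we get $\mat{K} = \bigl(\bigotimes_n \mat{U}^{(n)}\bigr)\bigl(\bigotimes_n \mat{\Sigma}^{(n)}\bigr)\bigl(\bigotimes_n \mat{V}^{(n)}\bigr)^\intercal$, which is itself an SVD of $\mat{K}$ since $\mat{U} \defeq \bigotimes_n \mat{U}^{(n)}$ and $\mat{V} \defeq \bigotimes_n \mat{V}^{(n)}$ are orthogonal. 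Hence
\[
  \mat{K}^\intercal \mat{K} = \mat{V}\,\mat{\Lambda}\,\mat{V}^\intercal, \qquad \mat{\Lambda} \defeq \bigotimes_{n=1}^N {\mat{\Sigma}^{(n)}}^\intercal\mat{\Sigma}^{(n)},
\]
where $\mat{\Lambda}$ is diagonal with entries $\prod_n \sigma_{t_n}^2(\mat{A}^{(n)})$ indexed by $(t_1,\dots,t_N)$.

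Next I would invert and sandwich. For $\lambda > 0$ the matrix $\mat{\Lambda} + \lambda\mat{I}$ is positive definite, so its pseudoinverse equals its inverse, and orthogonality of $\mat{V}$ gives $(\mat{K}^\intercal\mat{K}+\lambda\mat{I})^+ = \mat{V}(\mat{\Lambda}+\lambda\mat{I})^{-1}\mat{V}^\intercal$. Substituting the SVD of $\mat{K}$ and cancelling $\mat{V}^\intercal\mat{V} = \mat{I}$ collapses the expression to
\[
  \mat{K}(\mat{K}^\intercal\mat{K}+\lambda\mat{I})^+\mat{K}^\intercal = \mat{U}\,\mat{\Phi}\,\mat{U}^\intercal, \qquad \mat{\Phi} \defeq \mat{\Sigma}(\mat{\Lambda}+\lambda\mat{I})^{-1}\mat{\Sigma}^\intercal,
\]
where $\mat{\Sigma} \defeq \bigotimes_n \mat{\Sigma}^{(n)}$ and $\mat{\Phi}$ is diagonal with $(t_1,\dots,t_N)$-entry $\prod_n\sigma_{t_n}^2/\bigl(\prod_n\sigma_{t_n}^2+\lambda\bigr)$. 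Reading off the $((i_1,\dots,i_N),(j_1,\dots,j_N))$ entry and using that the Kronecker structure of $\mat{U}$ gives $U_{(i_1,\dots,i_N),(t_1,\dots,t_N)} = \prod_n u^{(n)}_{i_n t_n}$ yields exactly the claimed sum over $T$. The statistical case $\lambda=0$ is cleaner still: since $\mat{K}^+ = \bigotimes_n (\mat{A}^{(n)})^+$ by the mixed-product property for pseudoinverses, the cross leverage matrix $\mat{K}(\mat{K}^\intercal\mat{K})^+\mat{K}^\intercal = \mat{K}\mat{K}^+ = \bigotimes_n \mat{A}^{(n)}(\mat{A}^{(n)})^+$ factors entrywise into $\prod_n \ell_{i_n j_n}(\mat{A}^{(n)})$.

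The main obstacle is the careful bookkeeping with the rectangular SVD factors and the Kronecker index correspondence—verifying that $\mat{\Sigma}(\mat{\Lambda}+\lambda\mat{I})^{-1}\mat{\Sigma}^\intercal$ really is diagonal with the stated entries, and that the summand vanishes whenever any $\sigma_{t_n}=0$. This last point is precisely why $\lambda>0$ is needed for the first formula and why summing over all of $T=[I_1]\times\cdots\times[I_N]$ agrees with the effective sum over nonzero singular-value products. It also delivers the time bound: only the at most $R_1 R_2 \cdots R_N$ nonzero terms contribute, each costing $O(N)$ to evaluate.
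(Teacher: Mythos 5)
Your proposal is correct and follows essentially the same route as the paper's proof: apply the mixed-product property to assemble the SVD of $\mat{K}$ from the factor SVDs, diagonalize $\mat{K}^\intercal\mat{K}+\lambda\mat{I}$ in the Kronecker right-singular basis, sandwich to get $\mat{K}(\mat{K}^\intercal\mat{K}+\lambda\mat{I})^{+}\mat{K}^\intercal = \mat{U}\mat{\Phi}\mat{U}^\intercal$ with diagonal $\mat{\Phi}$, and read off entries via $U_{(i_1,\dots,i_N),(t_1,\dots,t_N)} = \prod_n u^{(n)}_{i_n t_n}$ (the paper writes this out for $N=3$ and notes the general case is identical). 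Your only deviation is handling the statistical case through $\mat{K}(\mat{K}^\intercal\mat{K})^{+}\mat{K}^\intercal = \mat{K}\mat{K}^{+} = \bigotimes_n \mat{A}^{(n)}{(\mat{A}^{(n)})}^{+}$ rather than setting $\lambda=0$ in the eigenvalue formula as the paper does, a cosmetic (and valid) variant; your explicit treatment of vanishing singular values is a nice touch the paper leaves implicit.
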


Now we present our fast sampling-based core tensor update.
The algorithm first computes the leverage scores for each factor matrix, and
then it initializes a data structure to query entries in the Kronecker product
design matrix $\mat{K}$ without explicitly constructing it
(to avoid creating a memory bottleneck).
Similarly, the algorithm then initializes a data structure to sample from the
leverage score distribution of $\mat{K}$ by
independently sampling the index for each dimension
from the corresponding factor matrix leverage score distribution.
This ensures that the time complexity for sampling all row indices of $\mat{K}$
is sublinear in the number of its rows.
Finally, the algorithm calls the approximate ridge regression subroutine with
these data structures as implicit inputs and updates the core tensor
accordingly.

\begin{algorithm}[H]
\caption{Fast core tensor update using approximate ridge regression.}
\label{alg:fast_core_tensor_update}
\begin{algorithmic}[1]
  \Function{FastCoreTensorUpdate}{$\tensor{X} \in \R^{I_1\times I_2\times \cdots \times I_N}$,
    factors $\mat{A}^{(n)} \in \R^{I_n \times R_n}$, $\lambda$,
    $\varepsilon$, $\delta$}
  \State Compute factor matrix leverage scores $\bm{\ell}(\mat{A}^{(n)})$
    by Equation~\Cref{eqn:ridge_leverage_score_def}
    for $n = 1$ to $N$
  \State Initialize data structure to query entries of
    design matrix
    $\mat{K} \gets \mat{A}^{(1)} \otimes \mat{A}^{(2)} \otimes \dots \otimes \mat{A}^{(N)}$ 
  \State Initialize data structure to sample from
    $\bm{\ell}(\mat{K})$ using factored leverage scores
  \State Set core tensor
      $\vectorize(\tensor{G})\hspace{-0.02cm} \gets \hspace{-0.05cm}\ApproxRidgeRegression(\mat{K}, \vectorize\parens{\tensor{X}},\hspace{-0.05cm} \bm{\ell}(\mat{K}), \lambda, 0, \varepsilon, \delta)$
\EndFunction
\end{algorithmic}
\end{algorithm}

\begin{restatable}{theorem}{FastCoreTensorUpdateRunningTime}
\label{thm:fast_core_tensor_update}
Let $R = R_1 R_2 \cdots R_N$.
\Cref{alg:fast_core_tensor_update} gives a $(1+\varepsilon)$-approximation
to the optimal core tensor weights with probability at least $1 - \delta$
in time $O(R^\omega \max\{\log(R/\delta),1/(\delta\varepsilon)\} + \sum_{n=1}^N I_n R_n^2)$,
where $\omega < 2.373$ is the matrix multiplication exponent,
and uses $O(R^2 + \sum_{n=1}^N I_n R_n)$ space.
\end{restatable}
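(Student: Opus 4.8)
The plan is to reduce everything to the guarantees of \Cref{sec:approximate_ridge_regression} and then bound the three sources of work---leverage score computation, sampling, and the sketched solve---while being careful never to materialize $\mat{K}$ or $\vectorize(\tensor{X})$. First I would establish correctness: the core tensor update is exactly the ridge regression instance with design matrix $\mat{K} = \mat{A}^{(1)} \otimes \cdots \otimes \mat{A}^{(N)}$ of width $d = R$ and response $\vectorize(\tensor{X})$. Since \Cref{alg:fast_core_tensor_update} invokes \ApproxRidgeRegression with the classical leverage scores $\bm{\ell}(\mat{K})$ and $d_\eff' = 0$, \Cref{cor:leverage_score_overestimate} guarantees that the resulting augmented distribution is a $\beta'$-overestimate with $\beta' \ge 1/2$ for the leverage scores of the augmented matrix $\overline{\mat{K}}$. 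Feeding this into \Cref{thm:approximate_ridge_regression} then yields the $(1+\varepsilon)$-approximation with probability $1 - \delta$ and fixes the sample count at $s = O(R \max\{\log(R/\delta), 1/(\delta\varepsilon)\})$.

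Next I would account for the work outside the solve. Each $\bm{\ell}(\mat{A}^{(n)})$ comes from a thin QR factorization of the $I_n \times R_n$ factor followed by squared row norms of its orthonormal basis, at $O(I_n R_n^2)$ time and $O(I_n R_n)$ space, so these sum to $O(\sum_{n=1}^N I_n R_n^2)$ time and $O(\sum_{n=1}^N I_n R_n)$ space. By \Cref{lemma:kronecker_cross_leverage_scores} the leverage scores of $\mat{K}$ factor as $\prod_{n=1}^N \ell_{i_n}(\mat{A}^{(n)})$, so after building one alias table per factor in $O(\sum_n I_n)$ time each of the $s$ indices is drawn by sampling its $N$ coordinates independently, plus a coin flip to branch into the augmented block. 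The sampled row of $\mat{K}$ is the Kronecker product $\mat{a}^{(1)}_{i_1:} \otimes \cdots \otimes \mat{a}^{(N)}_{i_N:}$, assembled incrementally in $O(R)$ time, and its response is the single lookup $x_{i_1 \dots i_N}$ (augmented rows are $\sqrt{\lambda}\,\mat{e}_j^\intercal$ with response $0$); crucially this touches only $O(s)$ entries of $\tensor{X}$ rather than all of $\vectorize(\tensor{X})$. Sampling and forming the sketched rows thus cost $O(s(N+R)) = O(R^2 \max\{\log(R/\delta), 1/(\delta\varepsilon)\})$, which is absorbed into the $R^\omega$ term since $\omega \ge 2$.

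The crux is instantiating the generic term $T(s,R)$ of \Cref{thm:approximate_ridge_regression}---the sketched least squares solve $\min_{\mat{x}} \norm{\mat{S}\overline{\mat{K}}\mat{x} - \mat{S}\overline{\mat{b}}}_2^2$---within both the time and the space budgets, and here the two constraints pull against each other. Writing $c = \max\{\log(R/\delta), 1/(\delta\varepsilon)\}$, the sketched design $\mat{S}\overline{\mat{K}}$ is $s \times R$ with $s = O(Rc)$, so storing it in full already costs $O(R^2 c)$ space, while assembling the normal equations by rank-one updates costs $O(sR^2) = O(R^3 c)$ time; both overshoot. I would resolve both at once by streaming the $s$ samples in $O(c)$ batches of $R$ rows: each batch is materialized as an $R \times R$ block in $O(R^2)$ reusable space, its Gram contribution is computed by a single fast matrix multiplication in $O(R^\omega)$ time, and the results are accumulated into a persistent $R \times R$ matrix and an $R$-vector (which needs only $O(R)$ response entries at a time). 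Summing over batches gives $O(R^\omega c)$ time and $O(R^2)$ space to form the normal equations, after which the $R \times R$ system is solved in $O(R^\omega)$ time.

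Combining the stages yields total time $O(R^\omega \max\{\log(R/\delta), 1/(\delta\varepsilon)\} + \sum_{n=1}^N I_n R_n^2)$ and total space $O(R^2 + \sum_{n=1}^N I_n R_n)$, the latter precisely because $\tensor{X}$ and $\mat{K}$ are only ever queried entrywise and never stored. I expect the main obstacle to be the blocked-multiplication argument reconciling the $R^\omega$ time bound with the $O(R^2)$ space bound; everything else is a direct instantiation of \Cref{thm:approximate_ridge_regression} together with the product-distribution structure of \Cref{lemma:kronecker_cross_leverage_scores}.
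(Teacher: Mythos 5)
Your proposal is correct and follows the same overall route as the paper's proof: correctness via \Cref{cor:leverage_score_overestimate} (leverage scores give $\beta' \ge 1/2$) plugged into \Cref{thm:approximate_ridge_regression}, the factored sampling of \Cref{lemma:kronecker_cross_leverage_scores} with per-factor score computation in $O(\sum_n I_n R_n^2)$ time, implicit entrywise access to $\mat{K}$ and $\tensor{X}$, and a normal-equations solve of the sketched problem. The one place you genuinely diverge is the step you correctly flagged as the crux, and there your argument is actually tighter than the paper's. Writing $c = \max\{\log(R/\delta), 1/(\delta\varepsilon)\}$, the paper assembles $\mat{A}^\intercal\mat{A}$ (with $\mat{A} = \mat{S}\overline{\mat{K}}$) by streaming $s = O(Rc)$ rank-one outer products, which it itself prices at $O(sR^2) = O(R^3 c)$, and then invokes fast inversion at $O(R^\omega)$ before asserting a total of $O(R^\omega c)$ --- but since $\omega < 3$, the $O(R^3 c)$ Gram assembly dominates and the stated total does not follow from the steps as written (the rank-one accumulation does achieve the $O(R^2)$ space bound, which is presumably why the paper uses it). Your blocked variant --- materializing $O(c)$ batches of $R$ rows, computing each batch's Gram contribution by one $R \times R$ fast multiplication in $O(R^\omega)$, and accumulating into a persistent $R \times R$ matrix --- is the argument the theorem actually needs: it delivers $O(R^\omega c)$ time and $O(R^2)$ working space simultaneously, repairing the gap between the paper's proof and its claimed bound. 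The remaining cosmetic differences (thin QR versus the pseudoinverse formula for factor leverage scores, alias tables versus CDFs with binary search) are interchangeable at the stated complexities.
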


The approximation and running time guarantees in \Cref{thm:fast_core_tensor_update}
follow primarily from \Cref{cor:leverage_score_overestimate}
and \Cref{thm:approximate_ridge_regression}, which show that leverage scores
are a sample-efficient overestimate for any ridge regression problem.
One appealing consequence of an approximate core tensor update whose running time
is predominantly a function of the multilinear rank and sketching parameters
(as opposed to the size of the input tensor $I_1 I_2 \cdots I_N$)
is that the max block improvement (MBI) algorithm~\cite{chen2012maximum}
becomes feasible for a much larger set of problems, 
since ALS is a special case of block coordinate descent.

\subsection{Missing Data and Ridge Leverage Score Upper Bounds}
\label{sec:tensor_completion}

Now we take a step towards the more general tensor
completion problem, where the goal is to learn a tensor decomposition that fits
the observed entries well and also generalizes to unseen data.
For Tucker decompositions and the ALS algorithm, this corresponds to removing
rows of the Kronecker product design matrix $\mat{K}$
(i.e., observations in the input tensor)
before updating the core.
To apply sampling-based sketching techniques in this setting and accelerate
the core tensor update, we need to understand how the $\lambda$-ridge leverage scores
of a matrix change as its rows are removed.
In particular, we need to give accurate upper bounds for the ridge
leverage scores of the matrix with removed rows.
Our first result shows how ridge leverage scores increase as the
rows of the matrix are removed.

\begin{restatable}{theorem}{RidgeScoreUpperBoundAfterRemovingRows}
\label{thm:ridge_score_upper_bound_after_removing_rows}
Let $\mat{L} = \mat{A}\parens{\mat{A}^\intercal\mat{A} + \lambda\mat{I}}^+\mat{A}^\intercal$
be the cross $\lambda$-ridge leverage score matrix of $\mat{A} \in \R^{n \times d}$.
For any $S \subseteq [n]$,
let $\mat{\tilde A}$ denote the matrix containing only the rows of~$\mat{A}$
indexed by $S$.
Let $\overline{S} = [n] \setminus S$ denote the set of missing row indices,
and let $\mat{L}_{\overline{S},\overline{S}}$ be the principal submatrix
of $\mat{L}$ containing only the entries indexed by $\overline{S}$.
Then, for any $\lambda > 0$ and $i \in S$, we have
\begin{align*}
  \ell_{i}^\lambda \parens*{\mat{\tilde A}}
  \le
  \ell_{i}^\lambda \parens*{\mat{A}}
  +
  \frac{1}{1-\lambda_{\max}\parens*{\mat{L}_{\overline{S},\overline{S}}}}
  \sum_{j \in \overline{S}} 
  \ell_{ij}^\lambda (\mat{A})^2,
\end{align*}
where $\lambda_{\max}(\mat{M})$ denotes the maximum eigenvalue of matrix $\mat{M}$.
\end{restatable}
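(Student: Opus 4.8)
The plan is to invert the regularized Gram matrix of the restricted matrix via the Sherman--Morrison--Woodbury identity and recognize every object that appears as a (cross) ridge leverage score. Since $\lambda > 0$, both $\mat{M} \defeq \mat{A}^\intercal\mat{A} + \lambda\mat{I}$ and $\mat{\tilde M} \defeq \mat{\tilde A}^\intercal\mat{\tilde A} + \lambda\mat{I}$ are positive definite, so their pseudoinverses are genuine inverses. Letting $\mat{B}$ be the submatrix of $\mat{A}$ whose rows are indexed by $\overline{S}$, the removed rows contribute $\mat{B}^\intercal\mat{B} = \sum_{j \in \overline{S}}\mat{a}_{j:}^\intercal\mat{a}_{j:}$, so that $\mat{\tilde M} = \mat{M} - \mat{B}^\intercal\mat{B}$ is a rank-$\abs{\overline{S}}$ downdate of $\mat{M}$.

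First I would apply Woodbury to this downdate, which yields
\[
  \mat{\tilde M}^{-1}
  =
  \mat{M}^{-1}
  + \mat{M}^{-1}\mat{B}^\intercal
    \parens*{\mat{I} - \mat{B}\mat{M}^{-1}\mat{B}^\intercal}^{-1}
    \mat{B}\mat{M}^{-1}.
\]
The crucial observation is that $\mat{B}\mat{M}^{-1}\mat{B}^\intercal$ is exactly the principal submatrix $\mat{L}_{\overline{S},\overline{S}}$, since its $(j,k)$ entry equals $\mat{a}_{j:}\mat{M}^{-1}\mat{a}_{k:}^\intercal = \ell_{jk}^\lambda(\mat{A})$ for $j,k \in \overline{S}$. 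Substituting into $\ell_i^\lambda(\mat{\tilde A}) = \mat{a}_{i:}\mat{\tilde M}^{-1}\mat{a}_{i:}^\intercal$ and writing $\mat{v} \defeq \mat{B}\mat{M}^{-1}\mat{a}_{i:}^\intercal$, whose entries are $v_j = \mat{a}_{j:}\mat{M}^{-1}\mat{a}_{i:}^\intercal = \ell_{ij}^\lambda(\mat{A})$ for $j \in \overline{S}$ (using symmetry of $\mat{M}^{-1}$), gives the clean decomposition
\[
  \ell_i^\lambda(\mat{\tilde A})
  =
  \ell_i^\lambda(\mat{A})
  + \mat{v}^\intercal\parens*{\mat{I} - \mat{L}_{\overline{S},\overline{S}}}^{-1}\mat{v}.
\]
Bounding the quadratic form by the operator norm of the inverse, $\mat{v}^\intercal(\mat{I} - \mat{L}_{\overline{S},\overline{S}})^{-1}\mat{v} \le \norm*{\mat{v}}_2^2 / (1 - \lambda_{\max}(\mat{L}_{\overline{S},\overline{S}}))$, and noting $\norm*{\mat{v}}_2^2 = \sum_{j \in \overline{S}}\ell_{ij}^\lambda(\mat{A})^2$, produces precisely the claimed inequality.

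The main obstacle is justifying that $\mat{I} - \mat{L}_{\overline{S},\overline{S}}$ (equivalently $\mat{I} - \mat{B}\mat{M}^{-1}\mat{B}^\intercal$) is invertible, so that the Woodbury inverse exists and the eigenvalue bound is meaningful with a strictly positive denominator. To settle this I would invoke the SVD formula in \Cref{eqn:ridge_leverage_score_svd_def}, which shows $\mat{L} = \mat{U}_r\mat{D}\mat{U}_r^\intercal$ with $\mat{D}$ diagonal and entries $\sigma_k^2(\mat{A})/(\sigma_k^2(\mat{A}) + \lambda) < 1$ whenever $\lambda > 0$; hence every eigenvalue of the symmetric PSD matrix $\mat{L}$ lies in $[0,1)$. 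Since $\mat{L}_{\overline{S},\overline{S}}$ is a principal submatrix of $\mat{L}$, Cauchy eigenvalue interlacing gives $\lambda_{\max}(\mat{L}_{\overline{S},\overline{S}}) \le \lambda_{\max}(\mat{L}) < 1$. This simultaneously makes $\mat{I} - \mat{L}_{\overline{S},\overline{S}}$ positive definite, so all the inverses above are well-defined, and guarantees $1 - \lambda_{\max}(\mat{L}_{\overline{S},\overline{S}}) > 0$, completing the argument.
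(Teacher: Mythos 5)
Your proposal is correct and takes essentially the same route as the paper's proof: the identical Woodbury downdate of the regularized Gram matrix, the identification of $\mat{B}\mat{M}^{-1}\mat{B}^\intercal$ with $\mat{L}_{\overline{S},\overline{S}}$ and of the vector $\mat{v}$ with the cross $\lambda$-ridge leverage scores, the Rayleigh-quotient (operator-norm) bound on the quadratic form, and the SVD eigenvalue formula plus Cauchy interlacing to conclude $\lambda_{\max}(\mat{L}_{\overline{S},\overline{S}}) < 1$. Your only deviation is cosmetic and slightly tidier: you take $\mat{B}$ to be the genuine $\abs{\overline{S}} \times d$ submatrix (rather than the paper's zero-padded $(\mat{I}-\mat{W})\mat{A}$) and you verify the invertibility of $\mat{I} - \mat{L}_{\overline{S},\overline{S}}$ up front, where the paper defers this justification until after applying Woodbury.
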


The proof relies on the Woodbury matrix identity and Cauchy interlacing
theorem and provides useful insight into the role
that cross ridge leverage score $\ell^\lambda_{ij}(\mat{A})$
plays when row $j$ is removed but $i$ remains.
We also generalize a well-known property of cross leverage
scores, which allows to sample from a product distribution in the tensor
completion setting when combined with \Cref{thm:ridge_score_upper_bound_after_removing_rows}.
\begin{restatable}{lemma}{SumOfSquaredCrossScores}
For any $\mat{A} \in \R^{n \times d}$ and $i \in [n]$, the
sum of the squared cross $\lambda$-ridge leverage scores is at most
the ridge $\lambda$-leverage score itself, with equality iff $\lambda = 0$.
Concretely, 
$
  \sum_{j=1}^n \ell_{ij}^\lambda \parens*{\mat{A}}^2
  \le
  \ell_{i}^\lambda \parens*{\mat{A}}.
$
\end{restatable}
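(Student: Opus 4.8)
The plan is to use the SVD-based expression for cross $\lambda$-ridge leverage scores from Equation~\Cref{eqn:ridge_leverage_score_svd_def}, which will reduce the claim to a statement about the singular value ``filter'' coefficients. First I would write $\mat{A} = \mat{U}\mat{\Sigma}\mat{V}^\intercal$ in compact SVD form and denote $w_k \defeq \sigma_k^2(\mat{A}) / (\sigma_k^2(\mat{A}) + \lambda)$ for $k \in [r]$, so that each $w_k \in (0,1]$ with $w_k = 1$ for all $k$ exactly when $\lambda = 0$. By Equation~\Cref{eqn:ridge_leverage_score_svd_def}, the cross score is $\ell_{ij}^\lambda(\mat{A}) = \sum_{k=1}^r w_k u_{ik} u_{jk}$, i.e. the $(i,j)$ entry of $\mat{U} \mat{W} \mat{U}^\intercal$ where $\mat{W} = \diag(w_1,\dots,w_r)$.

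The key step is to compute $\sum_{j=1}^n \ell_{ij}^\lambda(\mat{A})^2$ as a matrix product. Since the cross leverage score matrix is $\mat{L} = \mat{U}\mat{W}\mat{U}^\intercal$, the sum over $j$ of its squared $i$-th row entries is exactly $(\mat{L}\mat{L}^\intercal)_{ii} = (\mat{L}^2)_{ii}$, because $\mat{L}$ is symmetric. Now I would exploit orthonormality of the columns of $\mat{U}$: since $\mat{U}^\intercal \mat{U} = \mat{I}_r$, we get $\mat{L}^2 = \mat{U}\mat{W}\mat{U}^\intercal\mat{U}\mat{W}\mat{U}^\intercal = \mat{U}\mat{W}^2\mat{U}^\intercal$. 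Therefore
\[
  \sum_{j=1}^n \ell_{ij}^\lambda(\mat{A})^2
  = (\mat{L}^2)_{ii}
  = \sum_{k=1}^r w_k^2\, u_{ik}^2,
\]
while the plain ridge leverage score itself is $\ell_i^\lambda(\mat{A}) = \mat{L}_{ii} = \sum_{k=1}^r w_k\, u_{ik}^2$.

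The conclusion then follows termwise: since $0 < w_k \le 1$ we have $w_k^2 \le w_k$ for every $k$, and $u_{ik}^2 \ge 0$, so $\sum_k w_k^2 u_{ik}^2 \le \sum_k w_k u_{ik}^2$, which is precisely the claimed inequality. For the equality condition, I would argue that equality holds iff $w_k^2 u_{ik}^2 = w_k u_{ik}^2$ for every $k$, i.e. iff $w_k(1-w_k)u_{ik}^2 = 0$ for all $k$; when $\lambda = 0$ every $w_k = 1$ and equality is immediate, and conversely the ``if and only if'' statement as phrased (equality for all rows $i$) forces $w_k = 1$ on the relevant spectrum, hence $\lambda = 0$.

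The main obstacle is not the inequality, which is essentially immediate once the $\mat{L}^2 = \mat{U}\mat{W}^2\mat{U}^\intercal$ identity is in hand, but rather stating the equality condition with full precision. The subtlety is that equality can hold for a \emph{particular} row $i$ even when $\lambda > 0$ (for instance if $u_{ik} = 0$ on every coordinate where $w_k < 1$); the clean ``iff $\lambda = 0$'' characterization is the natural reading when one asks for equality simultaneously across all rows, and I would make sure the proof handles this distinction rather than overclaiming a per-row equivalence.
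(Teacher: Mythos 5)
Your proposal is correct and takes essentially the same route as the paper's own proof: both expand the cross scores via the SVD formula in Equation~\Cref{eqn:ridge_leverage_score_svd_def} and use the orthonormality of the columns of $\mat{U}$ to reduce the claim to the termwise bound $w_k^2 \le w_k$, with your identity $\mat{L}^2 = \mat{U}\mat{W}^2\mat{U}^\intercal$ being just the matrix form of the paper's entrywise double-sum computation. Your closing caveat about the equality case is in fact \emph{more} careful than the paper, which asserts the per-row ``iff $\lambda = 0$'' without addressing the degenerate possibility that $u_{ik} = 0$ on every coordinate where $w_k < 1$.
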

These results imply that the ridge leverage
scores of $\mat{A}$ are a $\beta$-overestimate for
ridge leverage scores of $\mat{\tilde A}$,
where $1/\beta = 1 + 1/(1 - \lambda_{\max}(\mat{L}_{\overline{S},\overline{S}}))$.
We discuss how this affects the sample complexity for sketching
in~\Cref{app:algorithm}
(e.g.,
if the factor matrices for $\mat{K}$ are semi-orthogonal,
then $1/\beta \le 2 + 1/\lambda$).

\section{Experiments}
\label{sec:experiments}

We compare the performance of ALS with and without
the \textsc{FastCoreUpdate} algorithm
by computing low-rank Tucker decompositions
of
large, dense synthetic tensors
and real-world movie data from the \textsc{TensorSketch} experiments of
\citet{malik2018low}.
Our experiments build on Tensorly~\cite{kossaifi2019tensorly}
and were run on an
Intel Xeon W-2135 processor (8.25 MB cache and 3.70 GHz)
using 128GB of RAM.
We refer to the ALS algorithm that uses ridge leverage score sampling
for the core as ALS-RS.

\paragraph{Synthetic Benchmarks.}
First we stress test ALS and ALS-RS on large, dense tensors.
We consider various low-rank Tucker decompositions
and show that the running time to update the core in ALS-RS remains
fixed as a function of the core size, without any loss in solution accuracy.
These profiles further illustrate that the core tensor update is
a severe bottleneck in an unmodified ALS algorithm.

Specifically, for each input shape we generate a random Tucker
decomposition with an (8, 8, 8) core tensor.
All of the entries in the factor matrices and core tensor are i.i.d.\
uniform random variables from $[0,1]$.
We add Gaussian noise from $\mathcal{N}(0,1)$ to one percent of
the entries in this Tucker tensor,
and call the resulting tensor $\tensor{Y}$.
The noise gives a lower bound of 0.10 for the RMSE when learning~$\tensor{Y}$.
Next we initialize a random Tucker decomposition with a smaller
core size and fit it to $\tensor{Y}$ using ALS and ALS-RS,
both starting from the same initial state.
For each input shape and multilinear rank of the learned Tucker decomposition,
we run ALS and ALS-RS for 10 iterations and report the mean
running time of each step type in \Cref{table:synthetic-benchmarks}.
In all the experiments we set $\lambda = 0.001$, and for
ALS-RS we set $\varepsilon = 0.1$ and $\delta = 0.1$.
The columns labeled F1, F2, and F3 correspond to factor matrix updates.

{
\setlength{\tabcolsep}{5.25pt}
\begin{table}
  \caption{Mean running times per step type for ALS and ALS-RS on dense synthetic tensors.}
  \label{table:synthetic-benchmarks}
  \centering
  \begin{tabular}{llrrrrrrrrrrrrrrrrrrrrrrrr}
    \toprule
    & & & & & \multicolumn{2}{c}{ALS} & \multicolumn{2}{c}{ALS-RS} \\
    \cmidrule(r){6-7} 
    \cmidrule(r){8-9} 
    Input Shape & Rank & F1 (s) & F2 (s) & F3 (s) & Core (s) & RMSE & Core (s) & RMSE \\
    \midrule
    (512, 512, 512) & (2, 2, 2) & 2.0	& 2.0	& 0.3	& 7.5 & 0.364 & 3.9 & 0.364 \\
                & (4, 2, 2) & 2.1	& 2.0	& 0.3 & 13.3 & 0.363 & 9.4 & 0.363 \\
                & (4, 4, 2) & 2.1	& 2.1	& 0.3	& 24.8 & 0.328 & 23.1 & 0.328 \\
                & (4, 4, 4) & 2.1	& 2.1	& 0.4	& 48.1 & 0.284 & 54.7 & 0.284 \\
    \midrule
    (1024, 512, 512) & (2, 2, 2) & 4.5 & 4.4 & 0.6 & 15.4 & 0.392 & 3.9 & 0.392 \\
                & (4, 2, 2) & 4.6 & 4.4 & 0.7 & 26.9 & 0.387 & 9.3 & 0.387 \\
                & (4, 4, 2) & 4.7 & 4.6 & 0.7 & 50.3 & 0.342 & 22.7 & 0.342 \\
                & (4, 4, 4) & 4.7	& 4.6	& 0.9	& 96.8 & 0.300 & 53.9 & 0.300 \\
    \midrule
    (1024, 1024, 512) & (2, 2, 2) & 13.3 & 13.2 & 1.3 & 35.0 & 0.425 & 3.9 & 0.425 \\
                & (4, 2, 2) & 13.4 & 13.0 & 1.3 & 58.0 & 0.413 & 9.3 & 0.413 \\
                & (4, 4, 2) & 13.7 & 13.7	& 1.4	& 104.0 & 0.382 & 22.7 & 0.382 \\
                & (4, 4, 4) & 13.7 & 13.7	& 1.9	& 196.9 & 0.321 & 54.0 & 0.321 \\
    \midrule
    (1024, 1024, 1024) & (2, 2, 2) & 27.7 & 27.6 & 2.7 & 67.9 & 0.406 & 3.9 & 0.406 \\
                   & (4, 2, 2) & 28.8	& 27.6 & 2.7 & 110.9 & 0.403 & 9.3 & 0.403 \\
                   & (4, 4, 2) & 28.9	& 28.8 & 2.8 & 196.3 & 0.355 & 22.8 & 0.356 \\
                   & (4, 4, 4) & 28.5	& 28.3 & 3.8 & 367.1 & 0.294 & 54.0 & 0.294 \\
    \bottomrule
  \end{tabular}
\end{table}
}

We draw several conclusions from~\Cref{table:synthetic-benchmarks}.
First, by using ridge leverage score sampling, ALS-RS
guarantees fast, high-quality core tensor updates while solving a substantially
smaller ridge regression problem in each iteration.
Comparing the core columns of ALS and ALS-RS at fixed ranks as the input
shape increases clearly demonstrates this.
Second, ALS and ALS-RS converge to the same local optimum for this data,
which is evident from the RMSE columns since both start from the same seed.

\paragraph{Movie Experiments.} 
Now we use ALS and ALS-RS to compute low-rank Tucker decompositions of a video
of a nature scene.
\citet{malik2018low} used their 
\textsc{TensorSketch} algorithm for Tucker decompositions
on this data since most frames are very similar,
except for a couple disturbances when a person walks by.
Thus, the video is expected to be compressible.
It consists of 2493 frames, each an image of size (1080, 1920, 3)
corresponding to height, width, and the RGB channel.
Similar to the experiments in \cite{malik2018low}, we
construct tensors of shape (100, 1080, 1920, 3) from 100-frame slices of the
video for our comparison.
We again let $\lambda = 0.001$, and set
$\varepsilon = 0.1$ and $\delta = 0.1$ for ALS-RS.

We use a Tucker decomposition with a core of shape (2, 4, 4, 2),
and we present the mean running times for each update step type over 5
iterations, after which both algorithms have converged.
The update steps of ALS took
(8.8, 3.1, 8.8, 163.1, 9935.5)
seconds,
corresponding to F1, F2, F3, F4, and the core update.
In contrast, ALS-RS took
(9.2, 2.7, 9.3, 221.6, 85.8) seconds per step type.
Both algorithms converge to 0.188 RMSE,
and at each step of every iteration,
the RMSEs of the two solutions differed in absolute value by at most 2.98e-5.
Hence, they converged along the same paths.
Using ALS-RS in this experiment sped up the core tensor update by more
than a factor of 100 and shifted the performance bottleneck to the
factor matrix update corresponding to the RGB channel.

\section{Conclusion}
\label{sec:conclusion}

This work accelerates ALS for regularized low-rank Tucker
decompositions by using ridge leverage score sampling in the core
tensor update.
Specificaly, it introduces a new approximate ridge regression algorithm that
augments classical leverage score distributions,
and it begins to explore properties of ridge leverage scores in a dynamic
setting.
These results also immediately extend the leverage score sampling-based CP
decomposition algorithms in~\cite{cheng2016spals,larsen2020practical} to
support L2 regularization.
Lastly, our synthetic and real-world experiments demonstrate that this
approximate core tensor update leads to substantial improvements in the running
time of ALS for low-rank Tucker decompositions while preserving the original
solution quality of ALS.

\section*{Acknowledgements}
MG was 
supported in part by the
Institute for Data Engineering and Science (IDEaS) and
Transdisciplinary Research Institute for Advancing Data Science (TRIAD)
Research Scholarships for Ph.D.\ students and postdocs.
Part of this work was done while he was a summer intern at Google Research.

\bibliographystyle{plainnat}
\bibliography{references}

\appendix

\newpage
\section{Missing Analysis from \Cref{sec:approximate_ridge_regression}}
\label{app:approximate_ridge_regression}

Here we show how to use $\beta$-overestimates for the
ridge leverage scores of a design matrix $\mat{A} \in \R^{n \times d}$
to create a substantially smaller ordinary least squares problem
whose solution vector gives a $(1+\varepsilon)$-approximation to the
original ridge regression problem.
Our proof relies on several sketching and leverage score sampling
results from randomized numerical linear
algebra~\cite{drineas2006fast,drineas2011faster,woodruff2014sketching}.
The leverage score sampling part of our argument was recently
organized all in one place in \cite[Appendix B]{larsen2020practical},
as these prerequisite results are well-understood but
scattered across many references.

\subsection{Fast Approximate Least Squares}
\label{subsec:fast_least_squares}
We start by considering the overdetermined least squares problem
defined by a matrix $\mat{A} \in \R^{n \times d}$ and
response vector $\mat{b} \in \R^{n}$,
where $n \ge d$ and $\rank(\mat{A}) = d$.
Define the optimal sum of squared residuals to be
\begin{equation}
\label{eqn:least_squares_problem}
  \cR^2
  =
  \min_{x \in \R^{d}}\norm*{\mat{A}\mat{x} - \mat{b}}_{2}^2.
\end{equation}
Assume for now that $\mat{A}$ is full rank, and
let the compact SVD of the design matrix be
$\mat{A} = \mat{U}_{\mat{A}} \mat{\Sigma}_{\mat{A}} \mat{V}_{\mat{A}}^\intercal$.
By definition, $\mat{U}_{\mat{A}} \in \R^{n \times d}$ is an orthonormal
basis for the column space of $\mat{A}$.
Let $\mat{U}_{\mat{A}}^\perp \in \R^{n \times (n-d)}$ be an orthonormal
basis for the $(n-d)$-dimensional subspace that is
orthogonal to the column space of~$\mat{A}$.
For notational simplicity,
let $\mat{b}^\perp = \mat{U}_{\mat{A}}^\perp {\mat{U}_{\mat{A}}^\perp}^\intercal \mat{b}$
denote the projection of $\mat{b}$ onto the orthogonal subspace $\mat{U}_{\mat{A}}^\perp$.
The vector $\mat{b}^\intercal$
is important because its norm is equal to the norm of the residual vector.
To see this, observe that $\mat{x}$ can be chosen so that $\mat{A}\mat{x}$
perfectly matches the part of $\mat{b}$ in the column space of $\mat{A}$, but
cannot (by definition) match anything in the range of $\mat{U}_{\mat{A}}^\perp$:
\[
  \cR^2
  =
  \min_{x \in \R^{d}}\norm*{\mat{A}\mat{x} - \mat{b}}_{2}^2
  =
  \norm*{\mat{U}_{\mat{A}}^\perp {\mat{U}_{\mat{A}}^\perp}^\intercal \mat{b}}_{2}^{2}
  =
  \norm*{\mat{b}^\perp}_{2}^{2}.
\]
We denote the solution to the least squares problem by $\mat{x}_{\opt}$,
hence we have $\mat{b} = \mat{A}\mat{x}_{\opt} + \mat{b}^\perp$.

Now we build on a structural result of \citet[Lemma 1]{drineas2011faster}
that establishes sufficient conditions on any sketching
matrix $\mat{S} \in \R^{s \times n}$
such that the solution $\mat{\tilde x}_{\opt}$ to the approximate
least squares problem
\begin{align}
\label{eqn:approximate_least_squares_problem}
  \mat{\tilde x}_{\opt}
  =
  \argmin_{x \in \R^d}\norm*{\mat{S} \parens*{\mat{A}\mat{x} - \mat{b}}}_{2}^2
\end{align}
gives a relative-error approximation to the original least squares problem.
The two conditions that we require of matrix $\mat{S}$ are:
\begin{align}
\label{eqn:structural_condition_1}
  &\sigma^2_{\min}\parens*{\mat{S} \mat{U}_{\mat{A}}} \ge 1 / \sqrt{2}, \text{~and~} \\
\label{eqn:structural_condition_2}
  &\norm*{\mat{U}_{\mat{A}}^\intercal \mat{S}^\intercal \mat{S} \mat{b}^\perp }_{2}^2 \le \varepsilon \cR^2 / 2,
\end{align}
for some $\varepsilon \in (0,1)$.
Note that while the \ApproxRidgeRegression~algorithm is randomized,
the following lemma is a deterministic statement.
Failure probabilities will enter our analysis later when we show that
this algorithm satisfies conditions~\Cref{eqn:structural_condition_1}
and \Cref{eqn:structural_condition_2} with sufficiently high probability.

\begin{lemma}[\cite{drineas2011faster}]
\label{lem:approx_least_squares}
Consider the overconstrained least squares approximation
problem in~\Cref{eqn:least_squares_problem}, and
let the matrix $\mat{U}_{\mat{A}} \in \R^{n \times d}$ contain the top $d$
left singular vectors of $\mat{A}$.
Assume the matrix $\mat{S}$ satisfies conditions
\Cref{eqn:structural_condition_1} and
\Cref{eqn:structural_condition_2} above, for some $\varepsilon \in (0, 1)$.
Then, the solution $\mat{\tilde x}_{\opt}$ to the approximate least squares
problem~\Cref{eqn:approximate_least_squares_problem} satisfies:
\begin{align}
  \norm*{\mat{A} \mat{\tilde x}_{\opt} - \mat{b} }_{2}^2
  &\le
  \parens*{1 + \varepsilon} \cR^2, \text{~and~}
  \\
  \norm*{\mat{\tilde x}_{\opt} - \mat{x}_{\opt}}_{2}^2
  &\le
  \frac{1}{\sigma^2_{\min}\parens*{\mat{A}}}
  \varepsilon \cR^2.
\end{align}
\end{lemma}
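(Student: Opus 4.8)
The plan is to reduce both claimed inequalities to the single bound $\norm{\mat{A}\parens{\mat{\tilde x}_{\opt} - \mat{x}_{\opt}}}_{2}^2 \le \varepsilon \cR^2$, and then to derive that bound from the two structural conditions using the normal equations of the sketched problem. The whole argument is deterministic, exactly as the surrounding text promises, so randomness never enters.

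First I would record a Pythagorean decomposition. Since $\mat{b} = \mat{A}\mat{x}_{\opt} + \mat{b}^\perp$ with $\mat{b}^\perp$ orthogonal to the column space of $\mat{A}$, I can write $\mat{A}\mat{\tilde x}_{\opt} - \mat{b} = \mat{A}\parens{\mat{\tilde x}_{\opt} - \mat{x}_{\opt}} - \mat{b}^\perp$, where the first term lies in the column space of $\mat{A}$ and the second is orthogonal to it. Hence $\norm{\mat{A}\mat{\tilde x}_{\opt} - \mat{b}}_{2}^2 = \norm{\mat{A}\parens{\mat{\tilde x}_{\opt} - \mat{x}_{\opt}}}_{2}^2 + \cR^2$, so the first inequality follows once the target bound is established. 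For the second inequality I would pass to the right singular basis: setting $\mat{z} = \mat{\Sigma}_{\mat{A}}\mat{V}_{\mat{A}}^\intercal\parens{\mat{\tilde x}_{\opt} - \mat{x}_{\opt}}$, orthonormality of $\mat{U}_{\mat{A}}$ and $\mat{V}_{\mat{A}}$ gives both $\norm{\mat{A}\parens{\mat{\tilde x}_{\opt} - \mat{x}_{\opt}}}_{2}^2 = \norm{\mat{z}}_{2}^2$ and $\norm{\mat{\tilde x}_{\opt} - \mat{x}_{\opt}}_{2}^2 = \norm{\mat{\Sigma}_{\mat{A}}^{-1}\mat{z}}_{2}^2 \le \sigma_{\min}^{-2}(\mat{A})\norm{\mat{z}}_{2}^2$. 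Thus both target inequalities reduce to proving $\norm{\mat{z}}_{2}^2 \le \varepsilon\cR^2$.

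To obtain that bound I would exploit that $\mat{\tilde x}_{\opt}$ solves the sketched least squares problem. Substituting $\mat{b} = \mat{A}\mat{x}_{\opt} + \mat{b}^\perp$ and reparametrizing by $\mat{w} = \mat{x} - \mat{x}_{\opt}$, the objective becomes $\norm{\mat{S}\mat{A}\mat{w} - \mat{S}\mat{b}^\perp}_{2}^2$; writing $\mat{A}\mat{w} = \mat{U}_{\mat{A}}\parens{\mat{\Sigma}_{\mat{A}}\mat{V}_{\mat{A}}^\intercal\mat{w}}$ and noting that $\mat{\Sigma}_{\mat{A}}\mat{V}_{\mat{A}}^\intercal\mat{w}$ ranges over all of $\R^{d}$ (because $\mat{A}$ is full rank), this is exactly the problem $\min_{\mat{z}' \in \R^d}\norm{\mat{S}\mat{U}_{\mat{A}}\mat{z}' - \mat{S}\mat{b}^\perp}_{2}^2$, whose minimizer is the vector $\mat{z}$ defined above. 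Its normal equations read $\parens{\mat{U}_{\mat{A}}^\intercal\mat{S}^\intercal\mat{S}\mat{U}_{\mat{A}}}\mat{z} = \mat{U}_{\mat{A}}^\intercal\mat{S}^\intercal\mat{S}\mat{b}^\perp$. Taking norms, condition~\Cref{eqn:structural_condition_1} lower bounds the left side by $\sigma_{\min}^2(\mat{S}\mat{U}_{\mat{A}})\norm{\mat{z}}_{2} \ge \tfrac{1}{\sqrt{2}}\norm{\mat{z}}_{2}$, while condition~\Cref{eqn:structural_condition_2} upper bounds the right side by $\sqrt{\varepsilon\cR^2/2}$. Squaring and combining yields $\tfrac{1}{2}\norm{\mat{z}}_{2}^2 \le \varepsilon\cR^2/2$, i.e.\ $\norm{\mat{z}}_{2}^2 \le \varepsilon\cR^2$, which closes both bounds.

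The argument is largely bookkeeping once the reduction to $\norm{\mat{z}}_{2}^2 \le \varepsilon\cR^2$ is in place, so I do not expect a serious obstacle. The one step needing care is the reparametrization: I must verify that minimizing the sketched objective over $\mat{x}$ is genuinely equivalent to minimizing $\norm{\mat{S}\mat{U}_{\mat{A}}\mat{z}' - \mat{S}\mat{b}^\perp}_{2}^2$ over all of $\R^{d}$, so that the normal equations hold for $\mat{z}$ exactly as written. This relies on $\mat{A}$ being full rank and $\mat{\Sigma}_{\mat{A}}$ being invertible; without it the identification of the sketched minimizer with $\mat{z}$ could break down. Everything else is the Pythagorean split and two applications of the structural conditions.
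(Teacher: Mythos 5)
Your proposal is correct and follows essentially the same route as the paper's proof: the same reduction of the sketched objective to $\min_{\mat{z}'}\norm{\mat{S}\mat{U}_{\mat{A}}\mat{z}' - \mat{S}\mat{b}^\perp}_{2}^2$, the same use of the normal equations with conditions \Cref{eqn:structural_condition_1} and \Cref{eqn:structural_condition_2} to get $\norm{\mat{z}}_{2}^2 \le \varepsilon\cR^2$, and the same Pythagorean and $\sigma_{\min}$ arguments for the two claims. The only cosmetic difference is that you define $\mat{z} = \mat{\Sigma}_{\mat{A}}\mat{V}_{\mat{A}}^\intercal(\mat{\tilde x}_{\opt} - \mat{x}_{\opt})$ explicitly and justify its optimality via a full-rank reparametrization, whereas the paper defines $\mat{z}_{\opt}$ implicitly through $\mat{U}_{\mat{A}}\mat{z}_{\opt} = \mat{A}(\mat{\tilde x}_{\opt} - \mat{x}_{\opt})$ and checks minimality by direct substitution --- these are the same vector and the same argument.
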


\begin{proof}
Let us first rewrite the sketched least squares problem induced by $\mat{S}$ as
\begin{align}
    \label{eqn:sketched_ls_proof_1}
    \min_{\mat{x} \in \R^{d}} \norm*{\mat{S} \mat{A}\mat{x} - \mat{S}\mat{b}}_{2}^2
    &=
    \min_{\mat{y} \in \R^{d}} \norm*{\mat{S} \mat{A}\parens*{\mat{x}_{\opt} + \mat{y}} - \mat{S}\parens*{\mat{A}\mat{x}_{\opt} + \mat{b}^{\perp}}}_{2}^2 \\
    &=
    \min_{\mat{y} \in \R^{d}} \norm*{\mat{S} \mat{A}\mat{y} - \mat{S}\mat{b}^{\perp}}_{2}^2 \notag \\
    \label{eqn:sketched_ls_proof_2}
    &=
    \min_{\mat{z} \in \R^{d}} \norm*{\mat{S} \mat{U}_{\mat{A}}\mat{z} - \mat{S}\mat{b}^{\perp}}_{2}^2.
\end{align}
Equation~\Cref{eqn:sketched_ls_proof_1} follows
since $\mat{b} = \mat{A}\mat{x}_{\opt} + \mat{b}^\perp$,
and \Cref{eqn:sketched_ls_proof_2} follows because the columns of $\mat{A}$ span
the same subspace as the columns of $\mat{U}_{\mat{A}}$.
Now, let $\mat{z}_{\opt} \in \R^{d}$ be such that
$\mat{U}_{\mat{A}} \mat{z}_{\opt} = \mat{A}(\mat{\tilde x}_{\opt} - \mat{x}_{\opt})$,
and note that $\mat{z}_{\opt}$ minimizes \Cref{eqn:sketched_ls_proof_2}.
This fact follows from
\begin{align*}
    \norm*{\mat{S}\mat{A}(\mat{\tilde x}_{\opt} - \mat{x}_{\opt}) - \mat{S}\mat{b}^\perp}_{2}^2
    =
    \norm*{
    \mat{S}\mat{A}\mat{\tilde x}_{\opt}
    - \mat{S}\parens*{\mat{b} - \mat{b}^\perp} - \mat{S}\mat{b}^\perp }_{2}^2
    =
    \norm*{\mat{S}\mat{A}\mat{\tilde x}_{\opt} - \mat{S}\mat{b}}_{2}^2.
\end{align*}
Thus, by the normal equations, we have
\[
    \parens*{\mat{S}\mat{U}_{\mat{A}}}^\intercal \mat{S}\mat{U}_{\mat{A}} \mat{z}_{\opt}
    =
    \parens*{\mat{S}\mat{U}_{\mat{A}}}^\intercal \mat{S} \mat{b}^\perp.
\]
Taking the norm of both sides and observing that under condition
\Cref{eqn:structural_condition_1}
we have
$\sigma_{i}((\mat{S}\mat{U}_{\mat{A}})^\intercal \mat{S}\mat{U}_{\mat{A}})
=
\sigma_{i}^2 (\mat{S}\mat{U}_{\mat{A}}) \ge 1/\sqrt{2}$, for all $i$,
it follows that
\begin{equation}
  \label{eqn:sketched_ls_proof_3}
    \norm*{\mat{z}_{\opt}}_{2}^2 / 2
    \le
    \norm*{\parens*{\mat{S}\mat{U}_{\mat{A}}}^\intercal \mat{S}\mat{U}_{\mat{A}} \mat{z}_{\opt}}_{2}^2
    =
    \norm*{\parens*{\mat{S}\mat{U}_{\mat{A}}}^\intercal \mat{S}\mat{b}^\perp }_{2}^2.
\end{equation}
Using condition \Cref{eqn:structural_condition_2}, we observe that
\begin{equation}
  \label{eqn:sketched_ls_proof_4}
    \norm*{\mat{z}_{\opt}}_{2}^2
    \le
    \varepsilon \cR^2.
\end{equation}

To establish the first claim of the lemma, let us rewrite the squared
norm of the residual vector as
\begin{align}
    \norm*{\mat{A} \mat{\tilde x}_{\opt} - \mat{b}}_{2}^2
    &=
    \norm*{\mat{A} \mat{\tilde x}_{\opt} - \mat{A}\mat{x}_{\opt} + \mat{A}\mat{x}_{\opt} - \mat{b}}_{2}^2 \notag \\
    &=
    \label{eqn:sketched_ls_proof_5}
    \norm*{\mat{A} \mat{\tilde x}_{\opt} - \mat{A}\mat{x}_{\opt}}_{2}^2
    + \norm*{\mat{A}\mat{x}_{\opt} - \mat{b}}_{2}^2 \\
    &=
    \label{eqn:sketched_ls_proof_6}
    \norm*{\mat{U}_{\mat{A}} \mat{z}_{\opt}}_{2}^2
    + \cR^2 \\
    &\le
    \label{eqn:sketched_ls_proof_7}
    \parens*{1 + \varepsilon} \cR^2,
\end{align}
where~\Cref{eqn:sketched_ls_proof_5} follows from the Pythagorean theorem
since $\mat{b}-\mat{A}\mat{x}_{\opt} = \mat{b}^\perp$, which is orthogonal
to $\mat{A}$, and consequently $\mat{A}(\mat{x}_{\opt} - \mat{\tilde x}_{\opt})$;
\Cref{eqn:sketched_ls_proof_6} follows from the definition of $\mat{z}_{\opt}$
and $\cR^2$;
and \Cref{eqn:sketched_ls_proof_7} follows from
\Cref{eqn:sketched_ls_proof_4} and the orthogonality of $\mat{U}_{\mat{A}}$.

To establish the second claim of the lemma, recall that
$\mat{A}(\mat{x}_{\opt} - \mat{\tilde x}_{\opt}) = \mat{U}_{\mat{A}} \mat{z}_{\opt}$.
Taking the norm of both sides of this expression, we have that
\begin{align}
    \label{eqn:sketched_ls_proof_8}
    \norm*{\mat{x}_{\opt} - \mat{\tilde x}_{\opt}}_{2}^2
    &\le 
    \frac{\norm*{\mat{U}_{\mat{A}} \mat{z}_{\opt}}_{2}^2}{\sigma_{\min}^2\parens*{\mat{A}}} \\
    \label{eqn:sketched_ls_proof_9}
    &\le
    \frac{\varepsilon \cR^2 }{\sigma_{\min}^2\parens*{\mat{A}}},
\end{align}
where \Cref{eqn:sketched_ls_proof_8} follows since $\sigma_{\min}(\mat{A})$
is the smallest singular
value of $\mat{A}$ and $\rank(\mat{A})=d$;
and \Cref{eqn:sketched_ls_proof_9} follows from
\Cref{eqn:sketched_ls_proof_4} and the orthogonality of $\mat{U}_{\mat{A}}$.
\end{proof}

\subsection{Tools for Satisfying the Structural Conditions}

Next we present two results that will be useful for proving that the
sketching matrix $\mat{S}$ satisfies the structural conditions in
Equations~\Cref{eqn:structural_condition_1} and \Cref{eqn:structural_condition_2}.
The first result is \cite[Theorem 17]{woodruff2014sketching},
which states that $\mat{S}\mat{U}_{\mat{A}}$ is a subspace embedding
for the column space of $\mat{U}_{\mat{A}}$.
This result can be thought of as approximate isometry
and is noticeably stronger than the desired condition
$\sigma_{\min}^2(\mat{S}\mat{U}_{\mat{A}}) \ge 1/\sqrt{2}$.

\begin{theorem}[\cite{woodruff2014sketching}]
\label{thm:structural_tool_1}
Consider $\mat{A} \in \R^{n \times d}$ and its compact SVD 
$\mat{A} = \mat{U}_{\mat{A}} \mat{\Sigma}_{\mat{A}} \mat{V}_{\mat{A}}^\intercal$.
For any $\beta > 0$, let $\hat{\bm{\ell}}(\mat{A})$
be a $\beta$-overestimate for the
leverage score distribution of $\mat{A}$.
Let $s > 144d \ln(2d / \delta) / (\beta \varepsilon^2)$.
Construct a sampling matrix $\mat{\Omega} \in \R^{n \times s}$ 
and rescaling matrix $\mat{D} \in \R^{s \times s}$ as follows.
Initially, let $\mat{\Omega} = \mat{0}^{n \times s}$
and $\mat{D} = \mat{0}^{s \times s}$.
For each column $j$ of $\mat{\Omega}$ and $\mat{D}$, independently,
and with replacement, choose a row index $i \in [n]$ with probability
$q_i = \hat{\ell}_{i}(\mat{A}) / \norm{\hat{\bm{\ell}}(\mat{A})}_{1}$,
and set $\omega_{ij} = 1$ and $d_{jj} = 1 / \sqrt{q_i s}$.
Then with probability at least $1-\delta$, simultaneously for all $i$,
we have
\[
  1 - \varepsilon
  \le
  \sigma_{i}^2 \parens*{\mat{D}^\intercal \mat{\Omega}^\intercal \mat{U}_{\mat{A}}}
  \le
  1 + \varepsilon.
\]
\end{theorem}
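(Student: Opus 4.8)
The plan is to recognize this as a standard leverage-score subspace embedding and prove it with a matrix Chernoff bound. First I would define the sampled matrix $\mat{Y} \defeq \mat{D}^\intercal \mat{\Omega}^\intercal \mat{U}_{\mat{A}} \in \R^{s \times d}$ and observe that its Gram matrix decomposes as a sum of $s$ independent rank-one PSD terms,
\[
  \mat{Y}^\intercal \mat{Y} = \sum_{j=1}^s \mat{X}_j, \qquad \mat{X}_j \defeq \frac{1}{q_{i_j} s}\, \mat{u}_{i_j}^\intercal \mat{u}_{i_j},
\]
where $i_j \in [n]$ is the row index drawn for column $j$ and $\mat{u}_i \in \R^{1 \times d}$ denotes the $i$-th row of $\mat{U}_{\mat{A}}$. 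Since $\sigma_i^2(\mat{Y}) = \lambda_i(\mat{Y}^\intercal \mat{Y})$, it suffices to show that every eigenvalue of $\mat{Y}^\intercal \mat{Y}$ lies in $[1-\varepsilon, 1+\varepsilon]$ with probability at least $1-\delta$.

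Next I would compute the mean of each summand. Because the index is drawn with probability $q_i$ and $\mat{U}_{\mat{A}}$ has orthonormal columns,
\[
  \E[\mat{X}_j] = \sum_{i=1}^n q_i \cdot \frac{1}{q_i s}\, \mat{u}_i^\intercal \mat{u}_i = \frac{1}{s}\, \mat{U}_{\mat{A}}^\intercal \mat{U}_{\mat{A}} = \frac{1}{s}\, \mat{I}_d,
\]
so that $\E[\mat{Y}^\intercal \mat{Y}] = \mat{I}_d$ and both $\lambda_{\min}$ and $\lambda_{\max}$ of this mean equal $\mu \defeq 1$.

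The crux is a uniform per-sample spectral bound, and this is where the $\beta$-overestimate enters. Since each $\mat{X}_j$ is rank one, $\lambda_{\max}(\mat{X}_j) = \norm{\mat{u}_{i_j}}_2^2 / (q_{i_j} s) = \ell_{i_j}(\mat{A}) / (q_{i_j} s)$, using $\norm{\mat{u}_i}_2^2 = \ell_i(\mat{A})$. By the definition of a $\beta$-overestimate, $q_i = \hat{\ell}_i(\mat{A}) / \norm{\hat{\bm{\ell}}(\mat{A})}_1 \ge \beta\, \ell_i(\mat{A}) / \norm{\bm{\ell}(\mat{A})}_1 = \beta\, \ell_i(\mat{A}) / d$, where I use $\norm{\bm{\ell}(\mat{A})}_1 = \rank(\mat{A}) = d$. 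Substituting cancels the per-row leverage score and yields the uniform bound $\lambda_{\max}(\mat{X}_j) \le d/(\beta s) \eqqcolon R$, independent of which row is sampled.

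With $\mu = 1$ and $R = d/(\beta s)$ in hand, the remaining step is routine: I would apply the matrix Chernoff bound for sums of independent PSD matrices to both tails, so that $\Pr[\lambda_{\min}(\mat{Y}^\intercal\mat{Y}) \le 1-\varepsilon]$ and $\Pr[\lambda_{\max}(\mat{Y}^\intercal\mat{Y}) \ge 1+\varepsilon]$ are each at most $d \cdot \exp(-c\,\varepsilon^2 \mu / R) = d \cdot \exp(-c\,\varepsilon^2 s \beta / d)$ for an absolute constant $c$. Taking a union bound over the two tails and forcing the total to be at most $\delta$ gives the stated threshold $s = \Omega\!\parens{d \ln(d/\delta) / (\beta \varepsilon^2)}$; the explicit constant $144$ merely absorbs $c$ together with the two-sided union bound. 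I expect the third step---the per-sample norm bound and its dependence on $\beta$---to be the conceptual heart, since it is exactly what ties the sample complexity to the quality of the leverage-score overestimate, whereas the concentration inequality itself is off-the-shelf.
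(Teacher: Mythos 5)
Your proof is correct, and there is nothing in the paper to compare it against: the paper states this theorem without proof, importing it as Theorem 17 of \citet{woodruff2014sketching}, and your argument—rank-one decomposition of the Gram matrix $\mat{Y}^\intercal\mat{Y}$, isotropic expectation, the per-sample bound $\lambda_{\max}(\mat{X}_j) \le d/(\beta s)$ via the $\beta$-overestimate (which also covers the rank-deficient case, since $\norm{\bm{\ell}(\mat{A})}_1 = \rank(\mat{A}) \le d$ only tightens the bound), and a two-sided matrix Chernoff bound with a union bound—is essentially the standard proof given in that cited source. The explicit constant $144$ with $\ln(2d/\delta)$ indeed just absorbs the Chernoff constants and the two tails, as you say.
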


To prove the second structural condition, we use the following result about
squared-distance sampling for approximate matrix multiplication
in~\cite[Lemma 8]{drineas2006fast}.
In our analysis for leverage score sampling-based ridge regression,
it is possible (and beneficial) that $\beta > 1$ in the statement below.
Therefore, we modify the original theorem statement and provide a proof to show
that the result is unaffected.

\begin{theorem}[\cite{drineas2006fast}]
\label{thm:structural_tool_2}
Let $\mat{A} \in \R^{n \times m}$, $\mat{B} \in \R^{n \times p}$,
and $s$ denote the number of samples.
Let $\mat{p} \in \R^{n}$ be a probability vector such that,
for all $i \in [n]$, we have
\[
  p_{i} \ge \beta \frac{\norm{\mat{a}_{i:}}_{2}^2}{ \norm*{\mat{A}}_{\frobenius}^2 },
\]
for some constant $\beta > 0$.
Sample $s$ row indices $(\xi^{(1)},\xi^{(2)},\dots,\xi^{(s)})$ from $\mat{p}$,
independently and with replacement, and form the approximate product
\[
  \frac{1}{s} \sum_{t=1}^s \frac{1}{p_{\xi^{(t)}}}
    \mat{a}_{\xi^{(t)}:}^\intercal \mat{b}_{\xi^{(t)}:}
  =
  \parens*{\mat{S}\mat{A}}^\intercal \mat{S}\mat{B},
\]
where $\mat{S} \in \R^{s \times n}$
is the sampling and rescaling matrix whose $t$-th row is defined by
the entries
\[
  s_{tk} =
  \begin{cases}
    \frac{1}{\sqrt{s p_{k}}} & \text{if $k = \xi_{t}$}, \\
    0 & \text{otherwise}.
  \end{cases}
\]
Then, we have
\[
  \E\bracks*{\norm*{\mat{A}^\intercal \mat{B}
  - \parens*{\mat{S}\mat{A}}^\intercal\mat{S}\mat{B}}_{\frobenius}^2}
  \le
  \frac{1}{\beta s}
  \norm*{\mat{A}}_{\frobenius}^2
  \norm*{\mat{B}}_{\frobenius}^2.
\]
\end{theorem}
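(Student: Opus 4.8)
The plan is to recognize the sketched product $\parens*{\mat{S}\mat{A}}^\intercal\mat{S}\mat{B}$ as an empirical average of $s$ independent and identically distributed rank-one estimators, and then reduce the expected squared Frobenius error to a single variance computation. Concretely, for one draw $\xi \sim \mat{p}$ define the random matrix $\mat{M}_\xi = \frac{1}{p_\xi}\mat{a}_{\xi:}^\intercal\mat{b}_{\xi:}$, so that $\parens*{\mat{S}\mat{A}}^\intercal\mat{S}\mat{B} = \frac{1}{s}\sum_{t=1}^s \mat{M}_{\xi^{(t)}}$. First I would verify unbiasedness of each term: summing over the sample space gives $\E[\mat{M}_\xi] = \sum_{i=1}^n p_i \cdot \frac{1}{p_i}\mat{a}_{i:}^\intercal\mat{b}_{i:} = \mat{A}^\intercal\mat{B}$, where the $p_i$ factors cancel exactly. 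Because the draws are independent, the expected squared error of the average decouples entrywise into a sum of single-term variances scaled by $1/s$, yielding
\begin{equation*}
  \E\bracks*{\norm*{\mat{A}^\intercal\mat{B} - \parens*{\mat{S}\mat{A}}^\intercal\mat{S}\mat{B}}_{\frobenius}^2}
  = \frac{1}{s}\parens*{\E\bracks*{\norm*{\mat{M}_\xi}_{\frobenius}^2} - \norm*{\mat{A}^\intercal\mat{B}}_{\frobenius}^2}.
\end{equation*}
Dropping the nonpositive $-\norm*{\mat{A}^\intercal\mat{B}}_{\frobenius}^2$ term immediately reduces the task to bounding the single-term second moment $\E[\norm*{\mat{M}_\xi}_{\frobenius}^2]$.

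For that second moment, the key structural observation is that $\mat{a}_{i:}^\intercal\mat{b}_{i:}$ is a rank-one outer product, so its squared Frobenius norm factorizes as $\norm*{\mat{a}_{i:}^\intercal\mat{b}_{i:}}_{\frobenius}^2 = \norm*{\mat{a}_{i:}}_{2}^2\norm*{\mat{b}_{i:}}_{2}^2$. Carrying out the expectation over $\xi$ then gives $\E[\norm*{\mat{M}_\xi}_{\frobenius}^2] = \sum_{i=1}^n \frac{1}{p_i}\norm*{\mat{a}_{i:}}_{2}^2\norm*{\mat{b}_{i:}}_{2}^2$. Finally I would substitute the hypothesis $p_i \ge \beta\,\norm*{\mat{a}_{i:}}_{2}^2/\norm*{\mat{A}}_{\frobenius}^2$, equivalently $1/p_i \le \norm*{\mat{A}}_{\frobenius}^2/(\beta\,\norm*{\mat{a}_{i:}}_{2}^2)$; the factors of $\norm*{\mat{a}_{i:}}_{2}^2$ cancel and the remaining sum reduces to $\frac{1}{\beta}\norm*{\mat{A}}_{\frobenius}^2\sum_{i=1}^n\norm*{\mat{b}_{i:}}_{2}^2 = \frac{1}{\beta}\norm*{\mat{A}}_{\frobenius}^2\norm*{\mat{B}}_{\frobenius}^2$. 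Combining with the variance identity above delivers the claimed bound.

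The computation itself is routine; the one point that genuinely requires attention---and the reason this statement is reproved rather than cited verbatim---is that the argument must stay valid when $\beta > 1$, unlike the classical setting where the sampling probabilities are underestimates with $\beta \le 1$. I would therefore emphasize that the only place $\beta$ enters is through the inequality $1/p_i \le \norm*{\mat{A}}_{\frobenius}^2/(\beta\,\norm*{\mat{a}_{i:}}_{2}^2)$, which holds for every $\beta > 0$ with no sign restriction, so no step of the derivation breaks. A minor technical caveat worth noting is the treatment of zero rows ($\norm*{\mat{a}_{i:}}_{2} = 0$): such rows contribute nothing to either $\mat{A}^\intercal\mat{B}$ or to $\E[\norm*{\mat{M}_\xi}_{\frobenius}^2]$, so they may be excluded from the sampling distribution without affecting any of the identities above.
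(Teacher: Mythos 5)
Your proposal is correct and follows essentially the same route as the paper's proof: both establish unbiasedness of the i.i.d.\ sampled estimator, reduce the expected squared Frobenius error to $\tfrac{1}{s}$ times a single-sample second moment by dropping the nonpositive $-\norm{\mat{A}^\intercal\mat{B}}_{\frobenius}^2$ term, and then apply the hypothesis $1/p_i \le \norm{\mat{A}}_{\frobenius}^2/(\beta \norm{\mat{a}_{i:}}_{2}^2)$, which---as you correctly emphasize---requires no restriction to $\beta \le 1$. The only difference is organizational: the paper carries out the variance computation entrywise via scalar variables $X_t$ and sums over $(i,j)$ at the end, whereas you package the same computation at the matrix level using rank-one estimators $\mat{M}_\xi$ and the identity $\norm{\mat{a}_{i:}^\intercal\mat{b}_{i:}}_{\frobenius}^2 = \norm{\mat{a}_{i:}}_{2}^2\norm{\mat{b}_{i:}}_{2}^2$, a tidier but mathematically equivalent bookkeeping.
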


\begin{proof}
First, we analyze the entry of 
$(\mat{S}\mat{A})^\intercal \mat{S}\mat{B}$ at index $(i,j)$.
By viewing the approximate product as a sum of outer products,
we can write this entry in terms of scalar random variables
$X_t$, for $t \in [s]$, as follows:
\[
  X_t = \frac{a_{\xi^{(t)}i} b_{\xi^{(t)}j}}{s p_{\xi^{(t)}}}
  \implies
  \bracks*{(\mat{S}\mat{A})^\intercal \mat{S}\mat{B}}_{ij}
  = \sum_{t=1}^s X_t.
\]
The expected values of $X_t$ and $X_t^2$ for all values of $t$ are
\begin{align*}
  \E\bracks*{X_t} &= \sum_{k=1}^n p_k \frac{a_{ki} b_{kj}}{s p_k}
                   = \frac{1}{s} \parens*{\mat{A}^\intercal \mat{B}}_{ij}, \text{~and~} \\
  \E\bracks*{X_t^2} &=
        \sum_{k=1}^n p_k \parens*{\frac{a_{ki} b_{kj}}{s p_k}}^2
        =
        \frac{1}{s^2}\sum_{k=1}^n \frac{\parens*{a_{ki} b_{kj}}^2}{p_k}.
\end{align*}
Thus, we have
$\E[\bracks{(\mat{S}\mat{A})^\intercal \mat{S}\mat{B}}_{ij}]
  = \sum_{t=1}^s \E[X_t] = \parens{\mat{A}^\intercal \mat{B}}_{ij}$,
which means the estimator is unbiased.
Furthermore, since the estimated matrix entry is the sum of $s$
i.i.d.\ random variables,
its variance is
\begin{align*}
  \var\parens*{\bracks*{(\mat{S}\mat{A})^\intercal \mat{S}\mat{B}}_{ij}}
  &=
  \sum_{t=1}^s \var\parens*{X_t} \\
  &=
  \sum_{t=1}^s \parens*{\E\bracks*{X_t^2} - \E\bracks*{X_t}^2} \\
  &=
  \sum_{t=1}^s \frac{1}{s^2} \sum_{k=1}^n \parens*{\frac{\parens*{a_{ki} b_{kj}}^2}{p_k} 
        - \parens*{ \mat{A}^\intercal\mat{B} }_{ij}^2 } \\
  &=
  \frac{1}{s} \sum_{k=1}^n \parens*{\frac{\parens*{a_{ki} b_{kj}}^2}{p_k} 
        - \parens*{ \mat{A}^\intercal\mat{B} }_{ij}^2 }.
\end{align*}
Now we apply this result to the expectation we want to bound:
\begin{align*}
  \E\bracks*{\norm*{\mat{A}^\intercal \mat{B}
  - \parens*{\mat{S}\mat{A}}^\intercal\mat{S}\mat{B}}_{\frobenius}^2}
  &=
  \sum_{i=1}^m \sum_{j=1}^p
  \E\bracks*{
    \parens*{ \bracks*{\parens*{\mat{S}\mat{A}}^\intercal\mat{S}\mat{B}}_{ij}
    - \parens*{\mat{A}^\intercal \mat{B}}_{ij}}^2} \\
  &=
  \sum_{i=1}^m \sum_{j=1}^p
  \E\bracks*{
    \parens*{ \bracks*{\parens*{\mat{S}\mat{A}}^\intercal\mat{S}\mat{B}}_{ij}
    - \E\bracks*{ \bracks*{\parens*{\mat{S}\mat{A}}^\intercal\mat{S}\mat{B}}_{ij} }}^2} \\
  &=
  \sum_{i=1}^m \sum_{j=1}^p
  \var\parens*{
    \bracks*{\parens*{\mat{S}\mat{A}}^\intercal\mat{S}\mat{B}}_{ij}} \\
  &=
  \frac{1}{s} \sum_{i=1}^m \sum_{j=1}^p
  \sum_{k=1}^n \parens*{\frac{\parens*{a_{ki} b_{kj}}^2}{p_k} 
        - \parens*{ \mat{A}^\intercal\mat{B} }_{ij}^2 } \\
  &=
  \frac{1}{s} 
  \sum_{k=1}^n \frac{\parens{\sum_{i=1}^m a_{ki}^2 }\parens{\sum_{j=1}^p b_{kj}^2 }}{p_k} 
        - \frac{n}{s}\sum_{i=1}^m\sum_{j=1}^p \parens*{ \mat{A}^\intercal\mat{B} }_{ij}^2 \\
  &=
  \frac{1}{s} 
  \sum_{k=1}^n
  \frac{\norm{\mat{a}_{k:}}_{2}^2 \norm{\mat{b}_{k:}}_{2}^2}{p_k}
  - \frac{n}{s} \norm*{\mat{A}^\intercal \mat{B}}_{\frobenius}^2 \\
  &\le
  \frac{1}{s} 
  \sum_{k=1}^n
  \frac{\norm{\mat{a}_{k:}}_{2}^2 \norm{\mat{b}_{k:}}_{2}^2}{p_k},
\end{align*}
where the last inequality uses the fact that the Frobenius norm of any matrix
is nonnegative.
Finally, by using the $\beta$-overestimate assumption on the sampling
probabilities, we have
\begin{align*}
  \E\bracks*{\norm*{\mat{A}^\intercal \mat{B}
  - \parens*{\mat{S}\mat{A}}^\intercal\mat{S}\mat{B}}_{\frobenius}^2}
  &\le 
  \frac{1}{s} 
  \sum_{k=1}^n
  \frac{\norm{\mat{a}_{k:}}_{2}^2 \norm{\mat{b}_{k:}}_{2}^2}{p_k} \\
  &\le
  \frac{1}{s} 
  \sum_{k=1}^n
  \parens*{ \norm*{\mat{A}}_{\frobenius}^2  \frac{\norm{\mat{a}_{k:}}_{2}^2 \norm{\mat{b}_{k:}}_{2}^2}{\beta \norm{\mat{a}_{k:}}_{2}^2} } \\
  &=
  \frac{1}{s \beta}
  \norm*{\mat{A}}_{\frobenius}^2
  \sum_{k=1}^n \norm*{\mat{b}_{k:}}_{2}^2 \\
  &=
  \frac{1}{s \beta}
  \norm*{\mat{A}}_{\frobenius}^2
  \norm*{\mat{B}}_{\frobenius}^2,
\end{align*}
which is the desired upper bound.
\end{proof}

\subsection{Generalizing to Ridge Regression}

Now that all of our tools are in place, we show how to design an augmented
distribution that allows us to efficiently use $\beta$-overestimates for the
$\lambda$-ridge leverage scores of $\mat{A}$ to construct a sketch of the
ridge regression problem whose solution gives a
$(1+\varepsilon)$-approximation for the original loss function.
Moreover, our analysis carefully tracks how $\beta'$ (the overestimate factor
for the augmented distribution) changes as a function of $\beta$ and the
effective dimension $d_\eff$ of the original problem.

\BetaOverestimateTheorem*

\begin{proof}
We start by analyzing the leverage score distribution for $\overline{\mat{A}}$.
For row indices $i \in [n]$, we have
\begin{align*}
  \Pr(X = i) = \frac{\ell_{i}\parens*{\overline{\mat{A}}}}{d}.
\end{align*}
It follows from \Cref{lem:augmented_design_matrix} and the definition of
effective dimension that
$\sum_{i=1}^n \ell_{i}\parens*{\overline{\mat{A}}} = d_\eff$.
Hence, for any augmented row index $j \in \{n+1,n+2,\dots,n+d\}$, we have
\begin{align}
\label{eqn:beta_bound_1}
  \Pr(X = j) \le \frac{\min\set*{1, d - d_\eff}}{d}.
\end{align}
Next, we analyze the distribution $\cD(\hat{\bm{\ell}}^\lambda(\mat{A}) , d_\eff')$.
Each row index $i \in [n]$ is sampled with probability
\begin{align}
  \label{eqn:beta_bound_2}
  \Pr(Y = i) &= \frac{\hat{\ell}_{i}^\lambda (\mat{A})}{\norm{\hat{\bm{\ell}}^\lambda (\mat{A})}_{1} + d\min\set{1,d - d_\eff'}} \\
  &= \frac{ \norm{\hat{\bm{\ell}}^\lambda (\mat{A})}_{1} }{ \norm{\hat{\bm{\ell}}^\lambda (\mat{A})}_{1} + d\min\set{1,d - d_\eff'}} \cdot \frac{ \hat{\ell}_{i}^\lambda (\mat{A}) }{ \norm{\hat{\bm{\ell}}^\lambda (\mat{A})}_{1} } \notag \\
  \label{eqn:beta_bound_3}
  &\ge \frac{ \norm{\hat{\bm{\ell}}^\lambda (\mat{A})}_{1} }{ \norm{\hat{\bm{\ell}}^\lambda (\mat{A})}_{1} + d\min\set{1,d - d_\eff'} } \cdot \beta \frac{ \ell_{i}^\lambda (\mat{A}) }{ \norm{\bm{\ell}^\lambda (\mat{A})}_{1} } \\
  \label{eqn:beta_bound_4}
  &= \parens*{1 + \frac{d\min\set{1,d - d_\eff'}}{\norm{\hat{\bm{\ell}}^\lambda (\mat{A})}_{1}}}^{-1} \beta \frac{ \ell_{i}\parens*{\overline{\mat{A}}} }{ d_\eff } \\
  &= \parens*{1 + \frac{d\min\set{1,d - d_\eff'}}{\norm{\hat{\bm{\ell}}^\lambda (\mat{A})}_{1}}}^{-1} \frac{\beta d}{d_\eff} \Pr(X = i). \notag
\end{align}
Equation~\Cref{eqn:beta_bound_2} follows from the definition
of $\cD(\hat{\bm{\ell}}^\lambda(\mat{A}) , d_\eff')$;
Equation~\Cref{eqn:beta_bound_3} follows since $\hat{\bm{\ell}}^\lambda(\mat{A})$
is a $\beta$-overestimate for the $\lambda$-ridge leverage score distribution
of $\mat{A}$;
and
Equation~\Cref{eqn:beta_bound_4} follows from the alternate characterization
of $\lambda$-ridge leverage scores in \Cref{lem:augmented_design_matrix}.

Further, each row index $j \in \{n+1,n+2,\dots,n+d\}$ is sampled with
probability
\begin{align}
  \label{eqn:beta_bound_5}
  \Pr(Y = j)
  &= 
  \frac{ \min\set*{1,d - d_\eff'} }{\norm{\hat{\bm{\ell}}^\lambda (\mat{A})}_{1} + d\min\set*{1,d - d_\eff'}} \\
  &=
  \frac{d}{\norm{\hat{\bm{\ell}}^\lambda (\mat{A})}_{1} + d\min\set{1,d - d_\eff'}}
  \cdot
  \frac{ \min\set*{1,d - d_\eff'} }{d} \notag \\
  \label{eqn:beta_bound_6}
  &\ge
  \parens*{\frac{ \norm{\hat{\bm{\ell}}^\lambda (\mat{A})}_{1} }{d} + \min\set{1,d - d_\eff'}}^{-1} \Pr(X = j).
\end{align}
Equation~\Cref{eqn:beta_bound_5} follows from the definition
of $\cD(\hat{\bm{\ell}}^\lambda(\mat{A}) , d_\eff')$, and
Equation~\Cref{eqn:beta_bound_6} uses the upper bound for
$\Pr(X=j)$ in Equation~\Cref{eqn:beta_bound_1}.
Therefore, we have
\begin{equation}
  \beta' =
  \min\set*{
    \parens*{1 + \frac{d \min\set{1, d-d_\eff'}}{\norm{\hat{\bm{\ell}}^\lambda(\mat{A})}_{1}} }^{-1}
    \frac{\beta d}{d_\eff}
    ,
  \parens*{\frac{\norm{\hat{\bm{\ell}}^\lambda(\mat{A})}_{1}}{d}
  + \min\set{1, d-d_\eff'}}^{-1}
  },
\end{equation}
as desired.
\end{proof}

\BetaOverestimateCorollary*

\begin{proof}
The first part of the claim follows from the inequality
\[
  \frac{\ell_{i}\parens*{\mat{A}}}{\norm{ \bm{\ell}\parens*{\mat{A}} }_1}
  =
  \frac{\ell_{i}\parens*{\mat{A}}}{\rank(\mat{A})}
  \ge
  \frac{\ell_{i}^\lambda\parens*{\mat{A}}}{\rank(\mat{A})}
  =
  \frac{d_\eff}{\rank(\mat{A})}\cdot
  \frac{\ell_{i}^\lambda\parens*{\mat{A}}}{d_\eff}.
\]
The second part is then a consequence of \Cref{lemma:beta_overestimate}.
Letting $\hat{\bm{\ell}}^\lambda(\mat{A}) = \bm{\ell}(\mat{A})$
and $\beta = d_\eff / \rank(\mat{A})$, we have
\begin{align*}
  \beta'
  &=  
    \parens*{
      \frac{d_\eff}{d} \cdot \frac{\rank(\mat{A})}{d_\eff}
      + \min\set*{1, d - d_\eff}
    }^{-1}
  =  
    \parens*{
      \frac{\rank(\mat{A})}{d}
      + \min\set*{1, d - d_\eff}
    }^{-1},
\end{align*}
which completes the proof.
\end{proof}

Now we prove our main theorem in \Cref{sec:approximate_ridge_regression},
which (1) shows that the solution vector to the sketched regression problem
is a good approximation to the initial ridge regression problem, and (2) 
quantifies how many row samples are needed for a $(1+\varepsilon)$-approximation
as a function of the input overestimate factor $\beta$.

\ApproximateRidgeRegression*

\begin{proof}
It suffices to find a $(1+\varepsilon)$-approximate solution to the
equivalent least squares problem
\[
  \argmin_{\mat{x} \in \R^{d}} \norm*{\overline{\mat{A}} \mat{x} - \overline{\mat{b}}}_{2}^2.
\]
We know that the augmented distribution $\mathcal{D}(\hat{\bm{\ell}}^\lambda(\mat{A}), d_\eff')$
is a $\beta'$-overestimate 
for the leverage score distribution of $\overline{\mat{A}}$
by \Cref{lemma:beta_overestimate}.
This enables us to use standard leverage score-based sampling arguments to
show that the sketching matrix $\mat{S} \in \R^{s \times n}$ satisfies
the two desired structural conditions in \Cref{subsec:fast_least_squares}.
Once we prove that Equations~\Cref{eqn:structural_condition_1} and
\Cref{eqn:structural_condition_2} both hold with probability at least $1-\delta$,
we apply \Cref{lem:approx_least_squares} to conclude that the
solution vector to the sketched least squares problem
\[
  \mat{\tilde x}_{\opt} = \argmin_{\mat{x} \in \R^d}
  \norm*{
    \mat{S}\overline{\mat{A}} \mat{x} - \mat{S}\overline{\mat{b}}
  }_{2}^2
\]
is a $(1+\varepsilon)$-approximation to the original ridge regression
problem.
Therefore, it remains to show that the structural conditions hold with
sufficient probability.

\textbf{Structural Condition 1.}
We show that Equation~\Cref{eqn:structural_condition_1} holds
for $\overline{\mat{A}}$
with probability at least $1 - \delta/2$ by using \Cref{thm:structural_tool_1}.
First, observe that the sampling matrix $\mat{\Omega}$
and rescaling matrix $\mat{D}$ in this theorem statement are a decomposition
of the sketching matrix in \Cref{alg:approximate_ridge_regression}
(i.e., $\mat{D}^\intercal \mat{\Omega}^\intercal = \mat{S}$).
Second, observe that it suffices to set $\varepsilon = 1 - 1/\sqrt{2}$
in the statement of \Cref{thm:structural_tool_1} in order to satisfy
\Cref{eqn:structural_condition_1}.

Let the compact SVD of $\overline{\mat{A}}$ be
$\overline{\mat{A}} = \mat{U}_{\overline{\mat{A}}} \mat{\Sigma}_{\overline{\mat{A}}} \mat{V}^\intercal_{\overline{\mat{A}}}$.
Since $\mathcal{D}(\hat{\bm{\ell}}^\lambda(\mat{A}), d_\eff')$
is a $\beta'$-overestimate 
for the leverage score distribution of $\overline{\mat{A}}$, it follows
from \Cref{thm:structural_tool_1} that with probability at least $1 - \delta/2$,
\[
  \sigma_{\min}^2\parens*{\mat{S} \mat{U}_{\overline{\mat{A}}}} \ge 1 / \sqrt{2},
\]
as long as the number of samples $s$ is at least
\begin{align*}
  s > \frac{144d\ln(4d/\delta)}{\beta'(1-1/\sqrt{2})^2}
    >
    \frac{1680 d \ln(4d/\delta)}{\beta'}.
\end{align*}
The algorithm rescales $\hat{\bm{\ell}}^\lambda(\mat{A})$ on Line~2
so that $\norm{\hat{\bm{\ell}}^\lambda(\mat{A})}_1 = d$.
Therefore, it follows from
\Cref{lemma:beta_overestimate} and the fact that $d_\eff \le d$
that
\begin{align}
  \beta' &= 
  \min\set*{
    \parens*{1 + \frac{d \min\set{1, d-d_\eff'}}{\norm{\hat{\bm{\ell}}^\lambda(\mat{A})}_{1}} }^{-1}
    \frac{\beta d}{d_\eff}
    ,
  \parens*{\frac{\norm{\hat{\bm{\ell}}^\lambda(\mat{A})}_{1}}{d}
  + \min\set{1, d-d_\eff'}}^{-1}
  } \notag \\
  &\ge
  \label{eqn:beta_prime_bound_for_algorithm}
  \min\set*{\beta, 1} / \parens*{1 + \min\set*{1, d - d'_\eff}}.
\end{align}
Thus, the number of samples is
sufficient by our choice of $\beta'$ and $s$ 
on Lines~2--3 of \Cref{alg:approximate_ridge_regression},
so
Equation~\Cref{eqn:structural_condition_1} holds
for $\overline{\mat{A}}$ with probability at least $1 - \delta/2$.

\textbf{Structural Condition 2.}
We show that 
Equation~\Cref{eqn:structural_condition_2} holds
for $\overline{\mat{A}}$
with probability at least $1 - \delta/2$ by
using \Cref{thm:structural_tool_2} and Markov's inequality.
First observe that
\[
  \mat{U}_{\overline{\mat{A}}}^\intercal \overline{\mat{b}}^\perp
  =
  \mat{U}_{\overline{\mat{A}}}^\intercal
  \parens*{
  \mat{U}_{\overline{\mat{A}}}^\perp
  {\mat{U}_{\overline{\mat{A}}}^\perp}^\intercal
  \overline{\mat{b}}
  }
  =
  \mat{0}_{\rank\parens*{\overline{\mat{A}}}},
\]
where $\overline{\mat{b}}^\perp$ is defined analogously to $\mat{b}^\perp$
in \Cref{subsec:fast_least_squares}.
Thus, the second structural condition can be seen as
bounding how closely the sampled product approximates the zero vector.
To do this, we can apply \Cref{thm:structural_tool_2}
since the leverage scores of a matrix $\mat{M}$ are the squared distances of the
row vectors of its compact left-singular matrix $\mat{U}_{\mat{M}}$.
Combining these facts, it follows that
\begin{align*}
  \E\bracks*{ \norm*{ \mat{U}_{\overline{\mat{A}}}^\intercal
      \mat{S}^\intercal \mat{S} \overline{\mat{b}}^\intercal }_{2}^2 }
  &=
  \E\bracks*{ \norm*{
    \mat{0}_{\rank\parens*{\overline{\mat{A}}}} - 
    \mat{U}_{\overline{\mat{A}}}^\intercal
      \mat{S}^\intercal \mat{S} \overline{\mat{b}}^\intercal }_{2}^2 } \\
  &=
  \E\bracks*{ \norm*{
    \mat{U}_{\overline{\mat{A}}}^\intercal \overline{\mat{b}}^\perp - 
    \mat{U}_{\overline{\mat{A}}}^\intercal
      \mat{S}^\intercal \mat{S} \overline{\mat{b}}^\intercal }_{2}^2 } \\
  &\le
  \frac{1}{\beta' s} \norm*{\mat{U}_{\overline{\mat{A}}}^\intercal}_{\frobenius}^2 \norm*{\overline{\mat{b}}^\perp }_{2}^2 \\
  &=
  \frac{d}{\beta' s} \norm*{\overline{\mat{b}}^\perp }_{2}^2.
\end{align*}
Since the norm of the residual vector equals the norm of $\overline{\mat{b}}^\perp$,
applying Markov's inequality gives us
\begin{align}
\label{eqn:structural_2_sample_bound}
  \Pr\parens*{
     \norm*{ \mat{U}_{\overline{\mat{A}}}^\intercal
      \mat{S}^\intercal \mat{S} \overline{\mat{b}}^\intercal }_{2}^2
      \ge
      \frac{\varepsilon  \norm{\overline{\mat{b}}^\perp}_{2}^2}{2}
  }
  \le
  \frac{2 \E\bracks*{ \norm{ \mat{U}_{\overline{\mat{A}}}^\intercal
      \mat{S}^\intercal \mat{S} \overline{\mat{b}}^\intercal }_{2}^2 } }{ \varepsilon  \norm{\overline{\mat{b}}^\perp}_{2}^2 }
  \le
  \frac{2d}{\varepsilon \beta' s}.
\end{align}
Bounding the rightmost expression of \Cref{eqn:structural_2_sample_bound}
by the failure probability $\delta / 2$
implies that the number of samples must be at least
\[
  s \ge \frac{4d}{\beta' \delta \varepsilon}.
\]
The number of samples $s$ in \ApproxRidgeRegression~is 
large enough by \Cref{eqn:beta_prime_bound_for_algorithm} and 
Line~3 of \Cref{alg:approximate_ridge_regression}.
Therefore, 
Equation~\Cref{eqn:structural_condition_2} holds
for $\overline{\mat{A}}$ with probability at least $1 - \delta/2$.

Taking a union bound over the failure probabilities for the two structural
conditions ensures that both are satisfied with probability at least
$1 - (\delta / 2 + \delta/2) = 1 - \delta$.
Furthermore, since the running times of the two subroutines are parameterized
by $t$ and $T(n',d')$, the result immediately follows.
\end{proof}

\section{Missing Analysis from \Cref{sec:algorithm}}
\label{app:algorithm}

Now we prove several technical results about $\lambda$-ridge leverage scores
and their upper bounds as rows are removed from the design matrix.
First, we present an explicit formula for the $\lambda$-ridge leverage scores
of a Kronecker matrix in terms of the singular value decompositions of its
factor matrices.

\KroneckerCrossLeverageScores*

\begin{proof}
For notational brevity, we prove the
claim for $\mat{K} = \mat{A} \ktimes \mat{B} \ktimes \mat{C}$.
The order-$N$ version follows by the same argument.

First, the mixed property property of Kronecker products implies that
\begin{align*}
    \mat{K}^\intercal \mat{K}
    &=
    \parens*{\mat{A}^\intercal \mat{A}} \ktimes \parens*{\mat{B}^\intercal \mat{B}} \ktimes \parens*{\mat{C}^\intercal \mat{C}}.
\end{align*}
Let $\mat{A} = \mat{U}_{\mat{A}} \mat{\Sigma}_{\mat{A}} \mat{V}_{\mat{A}}^\intercal$
denote the SVD of $\mat{A}$
such that $\mat{U}_{\mat{A}} \in \R^{I_1 \times I_1}$ and $\mat{V}_{\mat{A}} \in \R^{R_1 \times R_1}$.
It follows from the orthogonality of $\mat{U}_{\mat{A}}$ that
\[
    \mat{A}^\intercal \mat{A}
    = \mat{V}_{\mat{A}} \mat{\Sigma}_{\mat{A}}^2 \mat{V}_{\mat{A}}^\intercal,
\]
where $\mat{\Sigma}_{\mat{A}}^2$ denotes $\mat{\Sigma}_{\mat{A}}^\intercal \mat{\Sigma}_{\mat{A}}$.
Similarly, let
$\mat{B} = \mat{U}_{\mat{B}} \mat{\Sigma}_{\mat{B}} \mat{V}_{\mat{B}}^\intercal$
and
$\mat{C} = \mat{U}_{\mat{C}} \mat{\Sigma}_{\mat{C}} \mat{V}_{\mat{C}}^\intercal$.
It follows from the mixed-product property that 
\begin{align*}
    \mat{K}^\intercal \mat{K} + \lambda \mat{I}
    &=
    \parens*{\mat{V}_{\mat{A}} \mat{\Sigma}^2_{\mat{A}} \mat{V}_{\mat{A}}^\intercal}
    \ktimes
    \parens*{\mat{V}_{\mat{B}} \mat{\Sigma}^2_{\mat{B}} \mat{V}_{\mat{B}}^\intercal}
    \ktimes
    \parens*{\mat{V}_{\mat{C}} \mat{\Sigma}^2_{\mat{C}} \mat{V}_{\mat{C}}^\intercal}
    + \lambda \mat{I} \\
    &=
    \parens*{\mat{V}_{\mat{A}} \ktimes \mat{V}_{\mat{B}} \ktimes \mat{V}_{\mat{C}}}
    \parens*{\mat{\Sigma}^2_{\mat{A}} \ktimes \mat{\Sigma}^2_{\mat{B}} \ktimes \mat{\Sigma}^2_{\mat{C}}}
    \parens*{\mat{V}_{\mat{A}}^\intercal \ktimes \mat{V}_{\mat{B}}^\intercal \ktimes \mat{V}_{\mat{C}}^\intercal}
    + \lambda \mat{I} \\
    &=
    \parens*{\mat{V}_{\mat{A}} \ktimes \mat{V}_{\mat{B}} \ktimes \mat{V}_{\mat{C}}}
    \parens*{\mat{\Sigma}^2_{\mat{A}} \ktimes \mat{\Sigma}^2_{\mat{B}} \ktimes \mat{\Sigma}^2_{\mat{C}} + \lambda \mat{I}}
    \parens*{\mat{V}_{\mat{A}}^\intercal \ktimes \mat{V}_{\mat{B}}^\intercal \ktimes \mat{V}_{\mat{C}}^\intercal}.
\end{align*}
Since $\parens{\mat{X} \mat{Y}}^+ = \mat{Y}^+ \mat{X}^+$ if $\mat{X}$ or $\mat{Y}$ is orthogonal,
we have
\begin{align*}
    \parens*{\mat{K}^\intercal \mat{K} + \lambda \mat{I}}^{+}
    &=
    \parens*{\parens*{\mat{V}_{\mat{A}} \ktimes \mat{V}_{\mat{B}} \ktimes \mat{V}_{\mat{C}}}
    \parens*{\mat{\Sigma}^2_{\mat{A}} \ktimes \mat{\Sigma}^2_{\mat{B}} \ktimes \mat{\Sigma}^2_{\mat{C}} + \lambda \mat{I}}
    \parens*{\mat{V}_{\mat{A}}^\intercal \ktimes \mat{V}_{\mat{B}}^\intercal \ktimes \mat{V}_{\mat{C}}^\intercal}}^{+} \\
    &=
    \parens*{\mat{V}_{\mat{A}}^\intercal \ktimes \mat{V}_{\mat{B}}^\intercal \ktimes \mat{V}_{\mat{C}}^\intercal}^{+}
    \parens*{\mat{\Sigma}^2_{\mat{A}} \ktimes \mat{\Sigma}^2_{\mat{B}} \ktimes \mat{\Sigma}^2_{\mat{C}} + \lambda \mat{I}}^{+}
    \parens*{\mat{V}_{\mat{A}} \ktimes \mat{V}_{\mat{B}} \ktimes \mat{V}_{\mat{C}}}^{+} \\
    &=
    \parens*{\mat{V}_{\mat{A}} \ktimes \mat{V}_{\mat{B}} \ktimes \mat{V}_{\mat{C}}}
    \parens*{\mat{\Sigma}^2_{\mat{A}} \ktimes \mat{\Sigma}^2_{\mat{B}} \ktimes \mat{\Sigma}^2_{\mat{C}} + \lambda \mat{I}}^{+}
    \parens*{\mat{V}_{\mat{A}}^\intercal \ktimes \mat{V}_{\mat{B}}^\intercal \ktimes \mat{V}_{\mat{C}}^\intercal}.
\end{align*}
Next, observe that
\begin{align*}
    \mat{K}
    &=
    \parens*{\mat{U}_{\mat{A}} \mat{\Sigma}_{\mat{A}} \mat{V}_{\mat{A}}^\intercal}
    \ktimes
    \parens*{\mat{U}_{\mat{B}} \mat{\Sigma}_{\mat{B}} \mat{V}_{\mat{B}}^\intercal}
    \ktimes
    \parens*{\mat{U}_{\mat{C}} \mat{\Sigma}_{\mat{C}} \mat{V}_{\mat{C}}^\intercal} \\
    &=
    \parens*{
    \mat{U}_{\mat{A}} \ktimes \mat{U}_{\mat{B}} \ktimes \mat{U}_{\mat{C}}
    }
    \parens*{
    \mat{\Sigma}_{\mat{A}} \ktimes \mat{\Sigma}_{\mat{B}} \ktimes \mat{\Sigma}_{\mat{C}}
    }
    \parens*{
    \mat{V}_{\mat{A}}^\intercal \ktimes \mat{V}_{\mat{B}}^\intercal \ktimes \mat{V}_{\mat{C}}^\intercal
    }.
\end{align*}
Putting everything together, the $\lambda$-ridge cross leverage scores can be expressed as
\begin{align}
\label{eqn:leverage_score_matrix_eigendecomposition}
    \mat{K}\parens*{\mat{K}^\intercal \mat{K} + \lambda \mat{I}}^{+} \mat{K}^\intercal
    &=
    \parens*{
        \mat{U}_{\mat{A}} \ktimes \mat{U}_{\mat{B}} \ktimes \mat{U}_{\mat{C}}
    }
    \mat{\Lambda}
    \parens*{
        \mat{U}_{\mat{A}} \ktimes \mat{U}_{\mat{B}} \ktimes \mat{U}_{\mat{C}}
    }^\intercal,
\end{align}
where
\[
    \mat{\Lambda} =
    \parens*{\mat{\Sigma}_{\mat{A}} \ktimes \mat{\Sigma}_{\mat{B}} \ktimes \mat{\Sigma}_{\mat{C}}}
    \parens*{\mat{\Sigma}^2_{\mat{A}} \ktimes \mat{\Sigma}^2_{\mat{B}} \ktimes \mat{\Sigma}^2_{\mat{C}} + \lambda \mat{I}}^{+}
    \parens*{\mat{\Sigma}_{\mat{A}} \ktimes \mat{\Sigma}_{\mat{B}} \ktimes \mat{\Sigma}_{\mat{C}}}.
\]
Equation~\Cref{eqn:leverage_score_matrix_eigendecomposition} is the
eigendecomposition of
$\mat{K}\parens*{\mat{K}^\intercal \mat{K} + \lambda \mat{I}}^{+} \mat{K}^\intercal$.
In particular, $\mat{\Lambda} \in \R^{I_1 I_2 I_3 \times I_1 I_2 I_3}$
is a diagonal matrix of eigenvalues, where
the $(i_1,i_2,i_3)$-th eigenvalue is
\begin{align}
\label{eqn:leverage_score_matrix_eigenvalues}
  \lambda_{(i_1,i_2,i_3)} =
    \frac{\sigma_{i_1}^2\parens*{\mat{A}} \sigma_{i_2}^2\parens*{\mat{B}} \sigma_{i_3}^2\parens*{\mat{C}}}{\sigma_{i_1}^2\parens*{\mat{A}} \sigma_{i_2}^2\parens*{\mat{B}} \sigma_{i_3}^2\parens*{\mat{C}} + \lambda}.
\end{align}
The claim about the values of
$\ell^\lambda_{(i_1,i_2,i_3),(j_1,j_2,j_3)}\parens{\mat{K}}$
then follows from the definition of cross $\lambda$-ridge leverage scores
in Equation~\Cref{eqn:ridge_leverage_score_svd_def}.
Furthermore, the statistical leverage score property holds because
setting $\lambda = 0$ gives an expression that is the product of the
leverage scores of the factor matrices.
\end{proof}

Next we analyze the time and space complexity of the fast core tensor update
via our approximate ridge regression algorithm using leverage scores as
an overestimate for the $\lambda$-ridge leverage scores.

\FastCoreTensorUpdateRunningTime*

\begin{proof}
The leverage scores of the factor matrix $\mat{A}^{(n)}$ can be computed in
$O(I_n R_n^2)$ time using
the pseudoinverse expression in Equation~\Cref{eqn:ridge_leverage_score_def}.
Next, compute cumulative density functions for each of the factor distributions
in $O(R_1 + R_2 + \cdots + R_N)$ time. This enables us to sample a row index of
from the leverage score distribution of $\mat{K}$ in time
$O(\log(R_1) + \log(R_2) + \cdots + \log(R_N))$ using $N$ binary searches
and the factorization property in \Cref{lemma:kronecker_cross_leverage_scores}.
The number of samples in the approximate ridge regression subroutine is
$s = O(R \max\{\log(R/\delta), 1/(\delta \varepsilon)\})$
since $\beta' \ge 1/2$ if leverage scores are used as the $\lambda$-ridge
leverage score overestimate by \Cref{cor:leverage_score_overestimate}.
Therefore, we can sample all of the row indices of
the ridge-augmented version of matrix $\mat{K}$ and
construct the sketch matrix $\mat{S}$ in
time $O(R \log(R) \max\{\log(R/\delta), 1/(\delta \varepsilon)\})$.

We solve the least squares problem with design matrix
$\mat{A} = \mat{S}\overline{\mat{K}}$ and response vector
$\mat{b} = \mat{S}\overline{\vectorize(\mat{X})}$
by computing $\mat{A}^\intercal \mat{A}$ and $\mat{A}^\intercal \mat{b}$
and using the normal equations.
The matrix $\mat{A}^\intercal \mat{A}$ is the outer product of
$s = O(R \max\{\log(R/\delta), 1/(\delta \varepsilon)\})$ row vectors of length $R$,
and can therefore be constructed in
$O(s R^2) = O(R^3 \max\{\log(R/\delta), 1/(\delta \varepsilon)\})$
time using the data structure to implicitly query entries of $\mat{K}$
from the factor matrices. 
Note that we first compute a row of $\mat{K}$ using this data structure, and
then we compute the outer product in $O(R^2)$ time.
We can compute $(\mat{A}^\intercal \mat{A})^+$ in $O(R^{\omega})$
time via~\cite{alman2021refined},
where $\omega < 2.737$, since $\mat{A}^\intercal \mat{A}$ is an $R \times R$ matrix.

We construct $\mat{A}^\intercal \mat{b}$ similarly in
$O(R \cdot R \max\{\log(R/\delta), 1/(\delta \varepsilon)\})$ time.
Putting everything together and using the normal equations,
we can compute
$\mat{\tilde x}_{\opt} = (\mat{A}^\intercal \mat{A})^+ \mat{A}^\intercal \mat{b}$
in $O(R^2)$ time.
Therefore, the total running time of \Cref{alg:fast_core_tensor_update} is
$O(R^\omega \max\{\log(R/\delta), 1/(\delta \varepsilon)\})$.

The approximation guarantee follows directly from
\Cref{cor:leverage_score_overestimate} and
\Cref{thm:approximate_ridge_regression}
since the augmented leverage scores of $\mat{K}$
are a $(1/2)$-overestimate of the true $\lambda$-ridge leverage scores.
We achieve the claimed space complexity by
using implicit representations of
$\mat{K}$ and its leverage scores.
\end{proof}

Now we prove two elucidatory upper bounds for $\lambda$-ridge leverage
scores as rows are removed from the design matrix.
The first upper bound crucially relies on the regularization strength
$\lambda$ being positive, but an analogous statement holds if we use the
pseudoinverse all the way throughout the proof.
This pseudoinverse-based result is useful for accurately bounding
statistical leverage scores (i.e., $\lambda = 0$)
when removing a set of rows lowers the rank of the matrix.
The second upper bound holds for any set of removed rows, and thus is not
particularly strong in general. That said, if the factor matrices are
well-structured (e.g., columnwise orthogonality constraints),
then it provides nontrivial guarantees.

\RidgeScoreUpperBoundAfterRemovingRows*

\begin{proof}
Let $\mat{W}$ be the $0$-$1$ diagonal matrix such that
$w_{ii}=1$ if $i\in S$ and $w_{ii}=0$ otherwise. Then
\[
    \tilde{\mat{A}}^\intercal \tilde{\mat{A}}
    =
    \mat{A}^\intercal \mat{A} - ((\mat{I}-\mat{W})\mat{A})^\intercal ((\mat{I}-\mat{W})\mat{A}).
\]
Let $\mat{B} = (\mat{I}-\mat{W})\mat{A}$.
For any $i \in S$, it follows that
\begin{align*}
    \ell^{\lambda}_i \parens*{\mat{\tilde A}}
    &=
    \mat{a}_{i:} \parens*{\mat{\tilde A}^\intercal \mat{\tilde A} + \lambda \mat{I}}^{+} \mat{a}_{i:}^\intercal \\
    &=
    \mat{a}_{i:} \parens*{\mat{A}^\intercal \mat{A} - \mat{B}^\intercal \mat{B} + \lambda \mat{I}}^{+} \mat{a}_{i:}^\intercal.
\end{align*}
The matrix
$\mat{A}^\intercal \mat{A} - \mat{B}^\intercal \mat{B} + \lambda \mat{I}$ is positive definite
because $\lambda > 0$, and hence invertible.
Therefore, by the Woodbury matrix identity,
we have
\begin{align*}
    \parens*{\mat{A}^\intercal \mat{A} - \mat{B}^\intercal \mat{B} + \lambda \mat{I}}^{-1}
    &=
    \parens*{\mat{A}^\intercal \mat{A} + \lambda \mat{I}}^{-1} \\
    &\hspace{0.40cm}+ \parens*{\mat{A}^\intercal \mat{A} + \lambda \mat{I}}^{-1}
        \mat{B}^\intercal
        \parens*{\mat{I} - \mat{B} \parens*{\mat{A}^\intercal \mat{A} 
                 + \lambda\mat{I}}^{-1} \mat{B}^{\intercal}}^{-1}
                 \mat{B} \parens*{\mat{A}^\intercal \mat{A} + \lambda \mat{I}}^{-1}.
\end{align*}
For each $i \in S$, let $\mat{v}_{i:}\in \R^{1 \times n}$ be the row vector whose
$j$-th entry is equal to $\ell^{\lambda}_{ij}\parens{\mat{A}}$
if $j \in \overline{S}$ and zero otherwise. Then
\[
  \mat{v}_{i:}
  = \mat{a}_{i:} \parens*{\mat{A}^\intercal \mat{A} + \lambda \mat{I}}^{-1} \mat{B}^{\intercal}.
\]
Hence, we have
\begin{align*}
    &
    \mat{a}_{i:}\parens*{\mat{A}^\intercal \mat{A} - \mat{B}^\intercal \mat{B} + \lambda\mat{I}}^{-1} \mat{a}_{i:}^{\intercal} \\
    &\hspace{0.15cm}=
    \mat{a}_{i:} \bracks*{
      \parens*{\mat{A}^\intercal \mat{A} + \lambda \mat{I}}^{-1}
      + 
      \parens*{\mat{A}^\intercal \mat{A} + \lambda \mat{I}}^{-1}
    \mat{B}^\intercal
    \parens*{\mat{I} - \mat{B} \parens*{\mat{A}^\intercal \mat{A} + \lambda\mat{I}}^{-1} \mat{B}^{\intercal}}^{-1} \mat{B} \parens*{\mat{A}^\intercal \mat{A} + \lambda \mat{I}}^{-1}} \mat{a}_{i:}^{\intercal} \\
    &\hspace{0.15cm}=
    \ell^{\lambda}_i\parens*{\mat{A}} + \mat{v}_{i:}
        \parens*{\mat{I} - \mat{B} \parens*{\mat{A}^\intercal \mat{A} + \lambda\mat{I}}^{-1} \mat{B}^{\intercal}}^{-1} \mat{v}_{i:}^\intercal.
\end{align*}
In terms of principal submatrices, this is equivalent to
\begin{align}
  \label{eqn:ridge_score_update_formula}
  \ell_{i}^\lambda \parens*{\mat{\tilde A}}
  =
  \ell_{i}^\lambda \parens*{\mat{A}}
  +
  \bm{\ell}^\lambda_{\overline{S}}\parens*{\mat{A}}^\intercal
  \parens*{\mat{I} - \mat{L}_{\overline{S},\overline{S}}}^{-1}
  \bm{\ell}^\lambda_{\overline{S}}\parens*{\mat{A}}.
\end{align}
Therefore, it follows from the Rayleigh--Ritz theorem that
\begin{align*}
  \ell_{i}^\lambda \parens*{\mat{\tilde A}}
  \le
  \lambda_{\max}\parens*{\parens*{\mat{I} - \mat{L}_{\overline{S},\overline{S}}}^{-1}}
  \norm*{\bm{\ell}^\lambda_{\overline{S}}\parens*{\mat{A}}}_{2}^2
  =
  \lambda_{\max}\parens*{\parens*{\mat{I} - \mat{L}_{\overline{S},\overline{S}}}^{-1}}
  \sum_{j \in \overline{S}} \ell_{ij}^\lambda\parens*{\mat{A}}^2.
\end{align*}
We know from \Cref{eqn:leverage_score_matrix_eigenvalues}
that all of the eigenvalues of $\mat{L}$ are nonnegative and
strictly less than one since
\[
  \lambda_{i}\parens*{\mat{L}} = \frac{\sigma_{i}^2\parens*{\mat{A}}}{\sigma_{i}^2\parens*{\mat{A}} + \lambda}.
\]
Therefore,
since $\mat{L}$ is symmetric and
$\mat{L}_{\overline{S},\overline{S}}$ is a principal submatrix of $\mat{L}$,
Cauchy's interlacing theorem implies that
\begin{equation}
\label{eqn:interlacing}
  0 \le \lambda_{i}\parens*{\mat{L}_{\overline{S},\overline{S}}}
  \le
  \lambda_{i}\parens*{\mat{L}} < 1.
\end{equation}
Finally, it follows from a standard eigenvalue argument that
\begin{align*}
  \lambda_{\max}\parens*{ \parens*{\mat{I} - \mat{L}_{\overline{S},\overline{S}}}^{-1} }
  =
  \frac{1}{1 - \lambda_{\max}\parens*{\mat{L}_{\overline{S},\overline{S}}}},
\end{align*}
which completes the proof.
\end{proof}


\begin{restatable}{corollary}{MaxEigenvalueUpperBound}
If
$\mat{K} = \mat{A}^{(1)} \otimes \mat{A}^{(2)} \otimes \dots \otimes \mat{A}^{(N)}$
where $\mat{A}^{(n)} \in \R^{I_n \times R_n}$,
$\mat{L} = \mat{K}\parens{\mat{K}^\intercal \mat{K} + \lambda\mat{I}}^+\mat{K}^\intercal$
is the matrix of cross $\lambda$-ridge leverage scores of $\mat{K}$,
then for any $\lambda > 0$ and $S \subseteq [n]$, we have
\[
  \frac{1}{1 - \lambda_{\max}\parens*{\mat{L}_{\overline{S},\overline{S}}}}
  \le
  1 + \frac{ \prod_{n=1}^N
    \norm*{\mat{A}^{(n)}}_{2}^2}{\lambda}.
\]
\end{restatable}

\begin{proof}
The claim is a direct consequence of the formula for the
eigenvalues of $\mat{L}$ in Equation~\Cref{eqn:leverage_score_matrix_eigenvalues}
and
the application of
Cauchy's interlacing theorem in Equation~\Cref{eqn:interlacing}.
Specifically, we know that
\begin{align*}
  \lambda_{\max}\parens*{\mat{L}_{\overline{S},\overline{S}}}
  \le
  \lambda_{\max}\parens*{\mat{L}}
  =
  \frac{\prod_{n=1}^N \sigma^2_{\max}\parens*{\mat{A}^{(n)}}}{\prod_{n=1}^N \sigma^2_{\max}\parens*{\mat{A}^{(n)}} + \lambda}
  =
  \frac{\prod_{n=1}^N \norm*{\mat{A}^{(n)}}_{2}^2 }{\prod_{n=1}^N \norm*{\mat{A}^{(n)}}_{2}^2 + \lambda}.
\end{align*}
Therefore, it follows that
\[
 \frac{1}{1 - \lambda_{\max}\parens*{\mat{L}_{\overline{S},\overline{S}}}}
  \le 
 \frac{\prod_{n=1}^N \norm*{\mat{A}^{(n)}}_{2}^2 + \lambda}{\lambda}
 =
  1 + \frac{ \prod_{n=1}^N \norm*{\mat{A}^{(n)}}_{2}^2 }{\lambda}.
  \qedhere
\]
\end{proof}

Last, we prove a property of ridge leverage scores
that is closely related to the fact that hat matrices in linear regression
are idempotent.
This inequality can be combined with the upper bounds above to give
multiplicative bounds for $\lambda$-ridge leverage scores of a matrix after
some of its rows are removed,
and the coefficient in these multiplicative bounds
is approximately equal to $\beta^{-1}$ in the $\beta$-overestimate.

\SumOfSquaredCrossScores*

\begin{proof}
First, observe that
\begin{align*}
  \sum_{j=1}^n \ell_{ij}^\lambda \parens*{\mat{A}}^2
  &=
  \sum_{j=1}^n \parens*{
    \sum_{k=1}^n \frac{\sigma_{k}^2\parens*{\mat{A}}}{\sigma_{k}^2\parens*{\mat{A}} + \lambda}
    u_{ik} u_{jk}
  }^2 \\
  &=
  \sum_{j=1}^n
  \sum_{k_1=1}^{n}
  \sum_{k_2=1}^{n}
    \parens*{\frac{\sigma_{k_1}^2\parens*{\mat{A}}}{\sigma_{k_1}^2\parens*{\mat{A}} + \lambda}}
    \parens*{\frac{\sigma_{k_2}^2\parens*{\mat{A}}}{\sigma_{k_2}^2\parens*{\mat{A}} + \lambda}}
    u_{i k_1} u_{j k_1} u_{i k_2} u_{j k_2} \\
  &=
  \sum_{k_1=1}^{n}
  \sum_{k_2=1}^{n}
    \parens*{\frac{\sigma_{k_1}^2\parens*{\mat{A}}}{\sigma_{k_1}^2\parens*{\mat{A}} + \lambda}}
    \parens*{\frac{\sigma_{k_2}^2\parens*{\mat{A}}}{\sigma_{k_2}^2\parens*{\mat{A}} + \lambda}}
    u_{i k_1} u_{i k_2}
  \sum_{j=1}^n
    u_{j k_1} u_{j k_2}.
\end{align*}
Since $\mat{U}$ is an orthogonal matrix, we have
\[
  \sum_{j=1}^n u_{j k_1} u_{j k_2}
  =
  \begin{cases}
    1 & \text{if $k_1 = k_2$}, \\
    0 & \text{if $k_2 \ne k_2$}.
  \end{cases}
\]
Therefore, it follows that
\begin{align*}
  \label{eqn:cross_scores_squared}
  \sum_{j=1}^n \ell_{ij}^\lambda \parens*{\mat{A}}^2
  &=
  \sum_{k_1=1}^{n}
    \parens*{\frac{\sigma_{k}^2\parens*{\mat{A}}}{\sigma_{k}^2\parens*{\mat{A}} + \lambda}}^2
    u_{i k}^2
  \le
  \sum_{k_1=1}^{n}
    \frac{\sigma_{k}^2\parens*{\mat{A}}}{\sigma_{k}^2\parens*{\mat{A}} + \lambda}
    u_{i k}^2
  =
  \ell_{i}^\lambda \parens*{\mat{A}}.
\end{align*}
Moreover, the inequality above implies that
the two expressions are equal if and only if $\lambda = 0$.
\end{proof}

\end{document}